\newtheorem{theorem}{Theorem}
\theoremstyle{definition}
\newtheorem{example}{Example}
\DeclareMathOperator{\im}{im}
\DeclareMathOperator{\pr}{pr}
\DeclareMathOperator{\tr}{tr}
\DeclareMathOperator{\SU}{SU}
\newcommand\SUl{\SU(2)_{\text{left}}}
\newcommand\SUr{\SU(2)_{\text{right}}}
\newcommand\Left{_{\text{left}}}
\newcommand\Right{_{\text{right}}}
\DeclareMathOperator{\SO}{SO}
\DeclareMathOperator{\SL}{SL}
\DeclareMathOperator{\Spin}{Spin}
\newcommand\so{\mathfrak{so}}
\newcommand\su{\mathfrak{su}}
\newcommand\conj{^{\mathsf c}}
\newcommand\CP[1]{\mathbb{C}\mathrm P^{#1}}
\newcommand\rep[1]{\mathbf{#1}}
\newcommand\drawdiamond[1]{
\begin{scope}
\clip (0,{#1}) -- ({#1},0) -- (0,{-1*#1}) -- ({-1*#1},0) -- cycle;
\draw[gray, dotted] ({-1*#1},{-1*#1}) grid[dotted] ({#1},{#1}); 
\end{scope}
\draw[gray, dotted] (0,{#1}) -- ({#1},0) -- (0,{-1*#1}) -- ({-1*#1},0) -- cycle;
}
\begin{document}

\title{Real homotopy theory and\\ supersymmetric quantum mechanics}
\author[1,2]{Hyungrok Kim}
\author[1]{Ingmar Saberi}
\affil[1]{{\small Walter Burke Institute for Theoretical Physics, Caltech, Pasadena, California 91125, USA}}
\affil[2]{{\small School of Natural Sciences, Institute for Advanced Study, Princeton, New Jersey 08540, USA}}
\date{\today \\[1ex]  {\small Report No. CALT-TH 2015-054}}
\maketitle

\begin{abstract}
In the context of studying string backgrounds, much work has been devoted to the question of how similar a general quantum field theory (specifically, a two-dimensional superconformal theory) is to a sigma model. Put differently, one would like to know how well or poorly one can understand the physics of string backgrounds in terms of concepts of classical geometry. Much attention has also been given of late to the question of how geometry can be encoded in a microscopic physical description that makes no explicit reference to space and time. We revisit the first question, and review both well-known and less well-known results about geometry and sigma models from the perspective of dimensional reduction to supersymmetric quantum mechanics. The consequences of arising as the dimensional reduction of a \(d\)-dimensional theory for the resulting quantum mechanics are explored. 
In this context, we reinterpret the minimal models of rational (more precisely, complex) homotopy theory as certain supersymmetric Fock spaces, with unusual actions of the supercharges. The data of the Massey products appear naturally as supersymmetric vacuum states that are entangled between different degrees of freedom. 
This connection between entanglement and geometry is, as far as we know, not well-known to physicists. 
In addition, we take note of an intriguing numerical coincidence in the context of string compactification on hyper-K\"ahler eight-manifolds.
\end{abstract}

\begin{multicols}{2}

\section*{Introduction}

Supersymmetric quantum mechanics, which describes a supersymmetric particle moving on a compact Riemannian manifold, has been studied by many authors (see, for example, \cite{Witten-Morse,AlvarezGaume,Witten-Constraints}) and has deep connections to the geometry and topology of the target space~\(M\) on which the particle moves. The theory is a supersymmetric sigma model in \(0+1\)~dimensions. Its Hilbert space is the space of complex-valued differential forms on~\(M\), equipped with the inner product that generalizes the \(L^2\) inner product from undergraduate quantum mechanics:
\begin{equation}
(\alpha, \beta) = \int_M \star \bar\alpha \wedge \beta. 
\label{innerproduct}
\end{equation}
The bar here denotes complex conjugation. The \(\mathcal N=1\) supersymmetry algebra in $0+1$ dimensions~\cite{AlvarezGaume} consists of the exterior derivative \(\dif{}\) and its adjoint \(\dif{}^\dagger\) with respect to the inner product~\protect\eqref{innerproduct}, as well as their anticommutator, the Laplacian (which is the Hamiltonian operator of the theory, and commutes with the supercharges). With respect to the grading, \(\dif{}\) carries degree one. 

We will sometimes refer to this collection of operators as the de~Rham algebra. Representations of this algebra are of two types: there is the standard or ``long'' representation, consisting of two generators of adjacent degree that are mapped to one another by the supercharges, and there are ``short'' one-dimensional representations that are annihilated by both~\(\dif{}\) and~$\dif{}^\dagger$. The ``BPS bound'' in this theory is simply $\Delta\ge 0$. It follows from the equation \(\Delta = \{ \dif{}, \dif{}^\dagger\}\), and it is easily shown that a representation is short if and only if it has zero energy.
We draw a picture of the spectrum of the theory in \protect\cref{cohomology}.

\begin{figure*}
\begin{center}
\begin{tikzpicture}[x=2.5cm,y=2cm]
\draw [->] (0.5, 0.8) -- (0.5,2) node[left] {\(\Delta\)} -- (0.5,3);
\draw (0.5,1) ++(-0.1,0) -- ({0.5+0.1},1);

\coordinate (E1) at (2,2);
\coordinate (E2) at (2.5,2);

\fill [black] (E1) circle (2pt);
\fill [black] (E2) circle (2pt);
\draw [->] (E1)++(0.1,0.1) to [out=30, in=150] node[above]{\(\scriptstyle\dif{}\)} ($(E2)+(-0.1,0.1)$);
\draw [<-] (E1)++(0.1,-0.1) to [out=-30, in=210] node[below]{\(\scriptstyle\dif{}^\dagger\)} ($(E2)+(-0.1,-0.1)$);

\coordinate (E3) at (1,2.5);
\coordinate (E4) at (1.5,2.5);

\fill [black] (E3) circle (2pt);
\fill [black] (E4) circle (2pt);
\draw [->] (E3)++(0.1,0.1) to [out=30, in=150] node[above]{\(\scriptstyle\dif{}\)} ($(E4)+(-0.1,0.1)$);
\draw [<-] (E3)++(0.1,-0.1) to [out=-30, in=210] node[below]{\(\scriptstyle\dif{}^\dagger\)} ($(E4)+(-0.1,-0.1)$);

%% Vacua
\coordinate (V1) at (1,1);
\coordinate (V2) at (2,1);
\coordinate (V3) at (3,1);
\fill [black] (V1) circle (2pt);
\fill [black] (V2) circle (2pt);
\fill [black] (V3) circle (2pt);

\coordinate (Vtext) at (3.5, 1);
\node[right] at (Vtext) {\(\left.\vphantom{\rule{0pt}{0.7cm}}\right\}\) vacua};

\coordinate (Etext) at (3.5, 2.25);
\node[right] at (Etext) {\(\left.\vphantom{\rule{0pt}{1cm}}\right\}\) excited states};
\end{tikzpicture}
\end{center}
\caption[The de~Rham complex of a compact manifold]{A schematic diagram of the de~Rham complex of a compact manifold, drawn to emphasize the connections with supersymmetry. The vertical axis is the energy (eigenvalue of the Laplacian), and the horizontal axis is the homological degree. The compactness assumption, which ensures that the spectrum is discrete, is crucial and cannot be relaxed.}\label{cohomology}
\end{figure*}
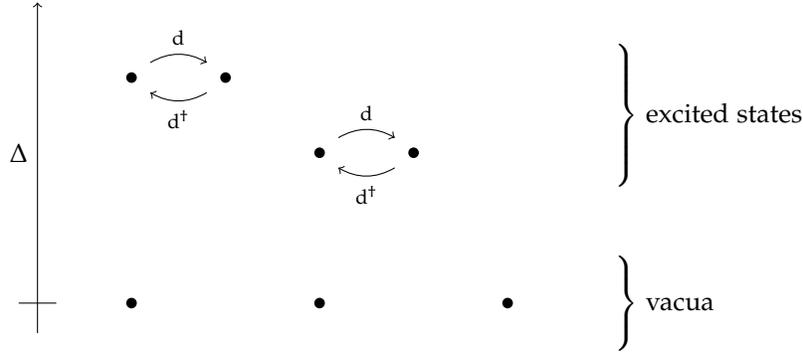

The BPS states of the theory also admit a description as the cohomology of a chosen supercharge. There is only one supercharge to choose, namely $\dif{}$; it is a scalar, since we are in $0+1$~dimensions, and so we can think of passing to its cohomology as a topological twist of the theory.

This story continues in many directions: for instance, one can imagine adding a superpotential term to the action. The resulting physics has a beautiful description in terms of the WKB approximation and the Morse complex of~\(M\). However, we will not continue down this path, and refer the interested reader to~\cite{Witten-Morse}. Instead, we wish to emphasize some points of differential topology which are less well known to most physicists.

When one studies a space~\(M\) using an algebraic invariant, it is natural to ask how well that invariant distinguishes inequivalent spaces. Many different manifolds share the same Euler characteristic, while the simplicial chain complex of~\(M\) has perfect information: from it we can reconstruct \(M\) up to homeomorphism by gluing together a simplicial complex. Of course, there are many different simplicial complexes corresponding to the same manifold; in this sense, despite containing perfect information, a triangulation of~$M$ is not an invariant at all.

The cohomology of~$M$ is an honest invariant, lying somewhere between these two in strength: it determines the Euler characteristic and is determined by a triangulation. But how close is it to a perfect invariant of~\(M\)?

In physics, the simplicial chain complex of the target space \(M\) does not arise naturally. However, the Hilbert space associated to~\(M\) (i.e.\ its de~Rham complex) plays a similar role. We are therefore led to ask: How much information about a manifold can be recovered from its de~Rham complex? And how much information about the de~Rham complex is contained in its cohomology---or, in physical language, how much can we learn about a supersymmetric quantum mechanics by studying its BPS spectrum?

To address these questions, we recall some structures and results of classical algebraic topology. (For more background, see~\cite{Bott-Tu}.)
Cohomology is an invariant of homotopy type: a space that is homotopy-equivalent to~\(M\) will have identical cohomology. For example, all the spaces \(\mathbb R^n\) for every~\(n\) have the homotopy type of a point. However, assuming that~\(M\) is compact without boundary, one can also deduce its dimension from its cohomology through Poincar\'e duality. One therefore has a way of distinguishing between homotopy-equivalent spaces such as \(M\) and \(M\times\mathbb R\). 

A homotopy equivalence between~\(M\) and another space~\(N\) induces a relationship (also called homotopy equivalence) between their respective de~Rham complexes, considered as commutative differential graded algebras. Homotopy-equivalent CDGAs have identical cohomology, but the converse statement is not true. To see why, we need to consider algebraic operations defined on cohomology classes. 

The de~Rham complex is an algebra. Its multiplication is the wedge product of differential forms. This product descends to cohomology classes, which therefore form an algebra. It also (in more subtle fashion) induces \emph{additional} structures: the cohomology carries a set of higher operations known as \emph{Massey products}. These are partially defined functions and, where they are defined, may or may not vanish. Roughly speaking, the Massey products record ambiguities about how products of certain harmonic forms become exact. 
Another intuitive picture is that Massey products are like higher-order linking numbers: they detect or measure entangledness between sets of three or more cycles which are pairwise unlinked, like the Borromean rings (\protect\cref{Borromeo}). Here we have been sloppy and identified cocycles with their Poincar\'e-dual homology cycles.

In the simplest case, the Massey triple product is defined as follows.
Let \(u\), \(v\), and \(w\) be representatives of three nontrivial cohomology classes in~$\mathrm H^\bullet(M)$, of homogeneous degree, and let $\bar u = (-)^{F+1} u$, where \(F\) denotes the degree operator. (This convention helps keep track of signs.) Furthermore, suppose that the pairwise products \([u][v] = [v][w] = 0\); in other words, that \(\bar u\wedge v = \dif s\) for some form~\(s\), and similarly \(\bar v \wedge w = \dif t\).The Massey product of \(u\), \(v\), and \(w\) is then given by the following expression:
\begin{equation}
m(u,v,w) = [ \bar s \wedge w + \bar u\wedge t].
\end{equation}
It is simple to check that the representative \(\bar s\wedge w + \bar u \wedge t\) is closed. For the Massey product of cohomology classes to be well defined, one should check that the choice of representatives \(u\), \(v\), and \(w\) does not affect the end result. In fact, the end result is not exactly invariant, but it becomes well-defined as an element of a particular quotient of the cohomology.

If one can make a consistent choice of representatives for cohomology classes of~\(M\) such that all higher Massey products simultaneously vanish, one says that~\(M\) is \emph{formal}. More transparently, a space is formal if its de~Rham complex is homotopy-equivalent to its cohomology ring, viewed as a CDGA with zero differential. Passing to cohomology thus loses information about the homotopy type of a manifold exactly when that manifold fails to be formal. This  is an answer to our second question above. 

As to the first question, the seminal work of Sullivan~\cite{Sullivan-Infinitesimal} and Barge~\cite{Barge} shows that, up to ``finite ambiguities,'' the diffeomorphism type of a simply connected smooth compact manifold is determined by the rational homotopy type of its de~Rham complex.\footnote{To be precise, we here mean the $\mathbb Q$-polynomial variant of the de~Rham complex, as defined by Sullivan.} Moreover, given a choice of Riemannian metric on the manifold, this homotopy type can be presented canonically by a finitely generated ``minimal model'' CDGA over~$\mathbb Q$. We give a more detailed exposition of Sullivan's methods and results later in the paper. For now, the reader should bear in mind that the de~Rham complex is, in some appropriate sense, ``almost'' a perfect algebraic model like a triangulation. 

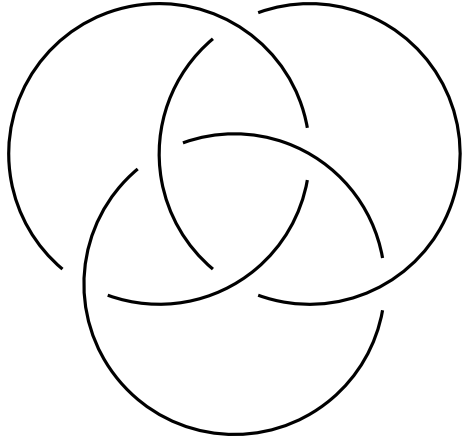
\begin{figure*}
\begin{center}
%%% see https://tex.stackexchange.com/questions/66216/draw-arc-in-tikz-when-center-of-circle-is-specified/66219#66219 for some explanation
\begin{tikzpicture}[x=2cm,y=2cm]

\draw [black, very thick] (2,2) ++(10:1) arc [radius=1, start angle=10, end angle=230];
\draw [black, very thick] (2,2) ++(250:1) arc [radius=1, start angle=250, end angle=350];

\draw [black, very thick] (3,2) ++(-110:1) arc [radius=1, start angle=-110, end angle=110];
\draw [black, very thick] (3,2) ++(130:1) arc [radius=1, start angle=130, end angle=230];

\draw [black, very thick] (2.5,{2-0.5*sqrt(3)}) ++(10:1) arc [radius=1, start angle=10, end angle=110];
\draw [black, very thick] (2.5,{2-0.5*sqrt(3)}) ++(130:1) arc [radius=1, start angle=130, end angle=350];
\end{tikzpicture}
\end{center}
\caption[The Borromean rings]{The Borromean rings.}\label{Borromeo}
\end{figure*}

One goal of this paper is to understand how these results relate to physics. To make contact with Sullivan's work, we need a physical context in which the algebraic structure of the de~Rham complex, or at least of the cohomology, arises. Supersymmetric quantum mechanics, by itself, does not provide such a context: the wedge product differential forms is just a variant of multiplying two wavefunctions together, which is not \emph{a~priori} meaningful in quantum mechanics. Algebraic structures do occur in physics, but they are more naturally connected to operators. As such, one needs a physical setting in which there is a state-operator correspondence. 

Two-dimensional conformal field theory is the simplest such setting, and two-dimensional superconformal sigma models are well studied. In order for a 2d sigma model to admit a topological twist, it must have enhanced supersymmetry; this occurs precisely when the target space is K\"ahler.

K\"ahler manifolds also play a special role in work of Sullivan and collaborators on rational homotopy theory; the paper \cite{DGMS} proves that every compact K\"ahler manifold is formal. Moreover, this fact follows from a simple identity relating the differential operators \(\partial\) and~\(\bar\partial\) on the de~Rham complex, called the \(\dif\dif{}\conj\)-lemma. The same identity is responsible for supersymmetry enhancement in K\"ahler sigma models; a version of the \(\dif\dif{}\conj\)-lemma can be proved for quantum mechanical systems obtained from dimensional reduction from field theories with enhanced supersymmetry. We will expand on this point later. 

There are strong constraints on the topology of K\"ahler manifolds. Formality is one of these, but much more can be said. To what extent enhanced supersymmetry imposes analogous constraints on two-dimensional \(\mathcal N=(2,2)\) SCFTs has been studied by many authors, notably~\cite{LVW,FOFKS}. A second aim of this paper is to revisit and clarify the analogies between K\"ahler geometry and \(\mathcal N=(2,2)\) SCFT. For us, the key ingredient is always supersymmetric quantum mechanics: we study the $0+1$-dimensional theories obtained by dimensional reduction from supersymmetric field theories in $d\ge 2$. In theories that are sigma models, this supersymmetric quantum mechanics is precisely the de~Rham complex of the target space, or (as in the case of the $B$-twist of the two-dimensional sigma model) some variant thereof.
Furthermore, extra structures arise when a \(d\)-dimensional theory arises as the dimensional reduction of a $d'$-dimensional theory, $d'>d$; this was considered by~\cite{FOFKS}. For us, $d = 1, 2, 4, 6$; as the dimension grows, the number of supercharges (and also the number of bosonic symmetries) increases accordingly. A table outlining these hierarchies can be found later in the paper~(\protect\cref{hierarchy}).

To reiterate: we review the well-known hierarchies of increasingly rich structures that appear in the following four contexts:
\begin{itemize}
\item target-space geometry that is generic, K\"ahler, and hyper-K\"ahler;\footnote{Here ``Calabi-Yau'' should appear strictly in between ``K\"ahler'' and ``hyper-K\"ahler.''}
\item supersymmetric quantum mechanics, with one, two, and four supercharges;
\item two-dimensional superconformal field theories, with \(\mathcal N=(1,1)\), \( (2,2) \), and $(4,4)$;
\item minimally supersymmetric field theories in dimensions two, four, and six.
\end{itemize}
We explore similarities and differences between the structures that emerge in each case. As the structure becomes richer, the differences between the various categories become fewer and fewer.

We hope that this paper will also serve a pedagogical purpose, being of some use to those who wish to learn the well studied subjects we review. In studying the literature on supersymmetric sigma models and K\"ahler geometry, certain unifying themes became apparent to us, which we felt were not adequately spelled out in existing references. We hope that at least some of this thematic unity comes across in our treatment. 

Throughout, we are motivated by the following series of increasingly speculative ideas and questions: The de~Rham complex is an algebraic model of a space, which naturally arises in physics as the Hilbert space of supersymmetric quantum mechanics. Physics also offers examples of other kinds of algebraic models for spaces: whenever a sigma model (in any dimension and with any amount of supersymmetry) can be defined on a target space \(M\), its spaces of states or operators constitute an algebraic model of~\(M\). This algebraic model should always reduce to some variant of the de~Rham complex upon dimensional reduction. 

In the case of the two-dimensional \(\mathcal N=(2,2)\) sigma models defined when~\(M\) is a Calabi-Yau threefold, this algebraic model does not retain perfect information about~\(M\) or even about its cohomology. However, this is far from a failure of these ideas; the ambiguity in recovering \(M\) from this algebraic model leads precisely to the phenomenon of mirror symmetry~\cite{Shatashvili-Vafa}!

Given an algebraic model of some kind, it is natural to ask for sufficient conditions that ensure that it is the model of some space. One can get some feel for this by enumerating all possible identifiable structures that are present on algebraic models of honest spaces; these are then necessary conditions to be the model of a space. In physics, one often speaks about this problem in the language of ``geometric'' and ``non-geometric'' theories. 

The Sullivan--Barge theorem says that, with appropriate conditions, the list of conditions and structures that are \emph{necessary} for a CDGA to algebraically model an honest space are also \emph{sufficient}. As such, a motivating problem for the line of work we have pursued would be to understand the algebraic structures corresponding to (some class of) physical quantum field theories, to develop the homotopy theory of such structures,\footnote{A ``homotopy between two physical theories'' should be understood as a one-parameter family of theories interpolating between the two; that is to say, a path from one to the other in the appropriate moduli space.} and then formulate sufficient conditions for such a theory to be describable as a sigma model. If one could do this rigorously and give a statement, analogous to Sullivan--Barge, allowing one to reconstruct the target space from its algebraic model, one would almost surely have a clear understanding of the phenomenon of mirror symmetry.
While we are of course far from doing any of this in this modest paper (and others~\cite{Zhou} have previously and more expertly drawn connections between mirror symmetry and rational homotopy theory), it is our hope that the ideas we sketch here will prove useful for others to think about and eventually bear fruit of this kind.

\section{Supersymmetry algebras and their representations 
in one spacetime dimension% and their representations}
}

The extended supersymmetry algebra in \(0+1\) dimensions~\cite{AlvarezGaume} is as follows:
\begin{equation}
\{ Q_i, Q_j^\dagger \} = 2 \delta_{ij} H, \quad \{Q_i, Q_j\} = 0. 
\end{equation}
From these basic commutation relations, it follows that all supercharges commute with \(H\) (the Hamiltonian of the theory). The indices range from~\(1\) to~\(\mathcal N\) for \(\mathcal N\)-extended supersymmetry; although there is no restriction on~\(\mathcal N\) in principle, in cases relevant to either physics or geometry, \(\mathcal N\) is usually a small power of two.

This algebra is usually supplanted with either an operator \(F\), defining fermion number or homological degree and taken to have integer eigenvalues, or $(-)^F$, which defines fermion number modulo two. In either case the supercharges $Q_i$ should carry one unit of fermion number. In the former case, a representation of the algebra with~\(\mathcal N=1\) becomes a chain complex of Hilbert spaces; in the latter, it is a $\mathbb Z/2\mathbb Z$-graded chain complex. The reader will no doubt have noticed that the \(\mathcal N=1\) algebra is just what we called the de~Rham algebra in the introduction.

We have already mentioned the familiar classification of irreducible representations of this algebra. To remind the reader, they can be labeled with a single nonnegative number, the eigenvalue of~$H$ (as well as an integer labeling the degree). When this number (the energy) is positive, the representation is ``long'' and consists of two generators of adjacent degree, which are mapped to one another by the supercharges; when it is zero, the representation is one-dimensional (``short''). The key picture to keep in mind is \protect\cref{cohomology}. The reader will have no trouble drawing the analogous picture of a $\mathbb Z/2\mathbb Z$-graded complex.

Thanks to formal Hodge theory, this classification persists for representations of the extended supersymmetry algebra. The bosonic operators consist of $H$ together with whatever degree operators $F_1,\ldots,F_\mathcal N$ are defined. If we insist that the~$F_i$ commute, representations will be joint eigenspaces for all of these, and can be labeled by their energy together with their multidegree. Short representations will still be one-dimensional; long representations will now have dimension $2^\mathcal N$, and they will consist of generators at each corner of an \(\mathcal N\)-dimensional cube in the degree space. The cohomology with respect to any supercharge $Q_i$ is the same; it counts the short representations, and cohomology classes therefore carry a well-defined multidegree. A long representation of \(\mathcal N=2\) is shown in the $(F_1,F_2)$-plane below:
\begin{equation}
\begin{tikzcd}
\bullet \arrow{r}{Q_1} & \bullet \\
\bullet \arrow{r}{Q_1} \arrow{u}{Q_2} & \bullet \arrow{u}[swap]{Q_2}
\end{tikzcd}
\label{square}
\end{equation}

Let us focus on the case \(\mathcal N=2\), which will be important in what follows. A crucial observation is that any representation of this algebra actually admits a \(\CP1\) family of \emph{different} actions (if we relax some requirements about the existence of degree operators). Define the parameterized supercharge
\begin{equation}
Q_t = t_1 Q_1 + t_2 Q_2,
\end{equation}
where the parameter $t\in \CP1$, and we have chosen a representative such that $|t_1|^2 + |t_2|^2 = 1$. Then it is an easy calculation to show that 
\begin{equation*}
\{ Q_t, Q_t^\dagger\} = 2\left( |t_1|^2 + |t_2|^2 \right) H = 2 H.
\end{equation*}
Further, let $\smash{\tilde{t}} = [-\bar{t}_2:\bar{t}_1]$. Then $Q_{\smash{\tilde{t}}}$ is another supercharge, and 
\begin{equation*}
\{ Q_t, Q_{\smash{\tilde{t}}}\} = \{ Q_t, {Q_{\smash{\tilde{t}}}}^\dagger \} = 0. 
\end{equation*}
These two supercharges therefore define another \(\mathcal N=2\) algebra. Further, they each carry total fermion number $F_1 + F_2 = 1$. However, they are no longer eigenstates of $F_1-F_2$; conjugation with $F_1-F_2$ acts on the \(\CP1\) parameter space by the rule 
\begin{equation*}
[t_1:t_2] \mapsto [t_1:-t_2],
\end{equation*}
and the two fixed points correspond to our two original supercharges. $Q_t$ cohomology is the same for all values of~$t$; since it describes the zero-energy spectrum of the same Hamiltonian, there is no way it can change.
(For physicist readers, we are just describing the consequences of $\SU(2)$ $R$-symmetry.)

There are two variants of this algebra that emerge naturally in geometry and physics, which we will now describe. These are in some sense intermediate between \(\mathcal N=1\) and \(\mathcal N=2\) supersymmetry. The first is what we will call the \(\mathcal N=1^2\) algebra: it consists of two mutually commuting copies of the de~Rham algebra, and so is equivalent to an \(\mathcal N=2\) algebra in which the condition that the two supercharges square to the \emph{same} Hamiltonian has been relaxed. In equations,
\begin{equation}
\{ Q_i, Q_j^\dagger \} = 2 \delta_{ij} H_i, \quad  \{Q_i, Q_j\} = 0.
\end{equation}
Using the Jacobi identity, it is easy to prove that $H_1$ and~$H_2$ must commute. This algebra acts, for example, on the Ramond sector of any \(\mathcal N=2\) superconformal field theory in two dimensions. 
The relevant closed subalgebra of the Ramond algebra is:
\begin{equation}
\begin{aligned}
(G^\pm_0)^2&=0\\
(\bar G^\pm_0)^2&=0\\
\{G^\pm_0,\bar G^\pm_0\}&=0\\
\{G^+_0,G^-_0\}&=L_0-c/24\\
\{\bar G^+_0,\bar G^-_0\}&=\bar L_0-c/24.
\end{aligned}
\end{equation}

Clearly, representations are now labeled by \emph{two} energies, $E_1$ and~$E_2$. The irreducible representations are of four types, according to whether or not each $E_i$ is zero; their dimensions are four, two, two and one. 

The construction we gave above still goes through and produces a \(\CP1\) family of \(\mathcal N=1\) algebras with supercharges~$Q_t$. However, the \(\CP1\) family of Hamiltonians is now nontrivial:
\begin{equation}
\{ Q_t, Q_t^\dagger \} = 2 H_t = 2 \left( |t_1|^2 H_1 + |t_2|^2 H_2 \right) .
\end{equation}
The zero-energy spectrum of $H_t$ is the same for almost all~$t$, consisting of states for which $E_1$ and~$E_2$ are both zero. (These are the genuinely short one-dimensional representations.) However, there are two special points in the \(\CP1\) moduli space that exhibit \emph{enhanced} vacuum degeneracy: the points $[0:1]$ and $[1:0]$, corresponding to our two original supercharges. The cohomology of~$Q_t$ thus jumps in total rank at these points. 

One should note that these jumps do not occur unless there are states for which one Laplacian is zero but the other is not. The condition that the spaces of vacuum states for the two Laplacians agree is strictly weaker than the condition that the two operators are identical. Nonetheless, by formal Hodge theory, it is enough to ensure that $Q_t$-cohomology is the same for all~$t$, and therefore that only square and singlet representations occur. As such, it is sufficient to establish the $\dif\dif{}\conj$-lemma.

In quantum mechanics, one should expect enhanced degeneracy to be an avatar of enhanced symmetry. The obvious question is: what symmetry is enhanced at the points of our \(\CP1\) moduli space that exhibit extra BPS states? 

The answer is that precisely at these points \(\mathcal N=1\) supersymmetry is enhanced to \(\mathcal N=1^2\). It is easy to check that, when we try to define a second supercharge $Q_{\smash{\tilde{t}}}$ by our prescription above, something goes wrong: namely, 
\begin{equation*}
\{ Q_t, {Q_{\smash{\tilde{t}}}}^\dagger \} = t_1t_2 \left( H_2 - H_1 \right) \neq 0.
\end{equation*}
As such, when~$t$ is generic, no supercharges other than $Q_t$ and its adjoint can be found that act in a compatible way.

This leads us to the second interesting variant, which we will refer to as \(\mathcal N=1.5\). The notation is meant to convey not only intermediacy between \(\mathcal N=1\) and \(\mathcal N=2\), but also that something is ``not whole'': the commutation relations in this case no longer define a closed algebra. 
In this case, the following commutation relations are imposed:
\begin{equation}
\{ Q_i, Q_i^\dagger\} = 2 H_i, \quad  \{ Q_i, Q_j \} = 0.
\end{equation}
However, we do not require that $S=\{ Q_1, Q_2^\dagger \}$ must vanish. It is simply a new bosonic operator that can be defined, about which nothing can \emph{a priori} be said.

On a generic complex manifold, this is the algebra that holds between the derivatives \(\partial\) and \(\bar\partial\). Moreover, in a generic double complex, this is the algebra satisfied by the two differentials. We will return to this point when reviewing results from complex geometry in the next section.

New, exotic irreducible representations of this algebra are possible. In addition to the standard long and short representations of \(\mathcal N=2\) (``squares'' and ``dots''), one can also have ``staircases'' of the sort depicted below:\footnote{We are grateful to David Speyer for helpful comments at \url{http://mathoverflow.net/questions/86947/}
that pointed these facts out to us.}
\begin{equation}
\begin{tikzcd}
\bullet \rar & \bullet &\\
& \bullet \rar\uar \drar[dotted] & \bullet \\
&& \bullet \uar
\end{tikzcd}
\label{staircase}
\end{equation}
As in~\protect\eqref{square}, the plane is the $(F_1, F_2)$-plane, and the arrows represent the action of the supercharges.  We will refer to the total dimension of such a representation as the ``length'' of the staircase, which is five for the staircase pictured above; any length can occur. The dashed line indicates the operator $S$ defined above, whose role is to go up and down stairs. 

Staircases of odd length, like the one pictured in~\protect\eqref{staircase}, contribute one generator to both $Q_1$-cohomology and $Q_2$-cohomology, as well as to the ``total'' cohomology of the supercharge $Q_1+Q_2$. However, as can easily be seen from the picture, the generator is in a different $(F_1-F_2)$-degree. We can no longer label cohomology classes with their ``axial'' quantum numbers. 

Furthermore, staircases of even length (depending on their orientation) contribute \emph{two} generators to one of the $Q_i$-cohomologies, but \emph{none} to the other. Drawing the appropriate picture will make this clear. These ideas should be familiar to anyone who is familiar with the spectral sequences associated to a double complex: they begin at $Q_1$- or $Q_2$-cohomology, converge to $(Q_1+Q_2)$-cohomology, and the differentials on the $k$-th page cancel pairs of generators that lie at opposite ends of a staircase of length $2k$. 

As always, the two anticommuting supercharges allow one to define a \(\CP1\) family of de~Rham algebras. However, unlike in the \(\mathcal N=1^2\) case, the Hamiltonians corresponding to different points in this moduli space can never be simultaneously diagonalized. The spectral sequence is a formal way of describing how different representations appear and disappear in the zero-energy spectrum of this parameterized family of Hamiltonians. 
Further remarks on the physics of spectral sequences and supersymmetric quantum mechanics will appear in~\cite{GNSSS-Forthcoming}.

\section{Review of K\"ahler, Calabi-Yau, and hyper-K\"ahler geometry}

We now give a brief review of some classical facts about K\"ahler geometry, 
in a way that is tailored to our purposes. For more details, the reader is referred to the excellent exposition in~\cite{DGMS}, or to the book~\cite{Huybrechts}. We will return to the formality result of~\cite{DGMS} when reviewing Sullivan's theory of minimal models in \protect\cref{sullivan}. Calabi-Yau and hyper-K\"ahler geometry provide levels of additional structure; we address these in order of decreasing generality. Readers should keep in mind that, in the context of two-dimensional superconformal sigma models, Calabi-Yau structure is required to define an \(\mathcal N=(2,2)\) SCFT. A hyper-K\"ahler target is necessary and sufficient for \(\mathcal N=(4,4)\) superconformal symmetry.

Let~\(M\) be a smooth manifold of real dimension~$2n$. An almost-complex structure on~\(M\) is a vector-bundle morphism $J\colon TM \to TM$ such that $J^2 = -1$. On the space of complex-valued differential forms, $J$ will induce a decomposition
\begin{equation}
\mathrm\Omega^n(M,\mathbb C) = \bigoplus_{p+q=n}\mathrm\Omega^{p,q}(M),
\end{equation}
which comes from the decomposition of the complexified tangent bundle into the $\pm i$ eigenspaces of~$J$.   
This decomposition allows us to write the exterior derivative operator $d$ as a sum of terms of different degree: 
\begin{equation}
d = \sum_{r+s=1} d^{r,s}.
\end{equation}
The almost-complex structure is said to be integrable when \(M\) is a complex manifold: it admits an atlas of complex-valued coordinates with holomorphic transition maps, such that multiplication by~$i$ agrees with the almost-complex structure~$J$. This occurs precisely when only terms of degree $(1,0)$ and~$(0,1)$ are present in the decomposition of~$d$. (An equivalent condition is that $d^{0,1}$ square to zero.) In this case, we use the symbols \(\partial\) and~\(\bar\partial\) for the operators $d^{1,0}$ and $d^{0,1}$; it is easy to check that these operators separately square to zero and anticommute with one another. A complex manifold is a space locally modeled on~$\mathbb C^n$; one is able to define which functions are holomorphic. 

The above shows immediately that the de~Rham complex of a complex manifold admits an action of \(\mathcal N=1.5\) supersymmetry. The de~Rham algebra (or \(\mathcal N=1\) supersymmetry algebra for quantum mechanics) acts in three meaningfully different ways:\footnote{As we emphasized in the previous section, a \(\CP1\) family of de~Rham algebras can be defined. The three actions we mention here are three representative points in this moduli space.} each of \(\partial\), \(\bar\partial\), and $d = \partial + \bar\partial$ may be thought of as a nilpotent supercharge. 

However, on a garden-variety complex manifold, supersymmetry is \emph{not} enhanced to \(\mathcal N=2\). While \(\partial\) anticommutes with $\bar \partial$, it may not commute with its adjoint: $\{ \partial, \bar\partial^\dagger \} \neq 0$. When this is the case (as we discussed in a general setting above) the respective Laplacians will not agree and will fail to commute with one another. Precisely when the manifold is K\"ahler, its de~Rham complex provides an \(\mathcal N=2\) supersymmetric quantum mechanics. We will return to this point after giving the formal definition of K\"ahler structure.

Suppose that~\(M\) is a complex manifold. A symplectic structure on~\(M\) is a choice of closed, non-degenerate two-form~$\omega \in \mathrm\Omega^2(M,\mathbb R)$. One says that the symplectic structure~$\omega$ is \emph{compatible} with the complex structure if the composite tensor
\begin{equation}
g(a,b) = \omega(a, Jb) 
\end{equation}
is a Riemannian metric on~\(M\). Symmetry of the metric tensor then implies that the form~$\omega$ has $(p,q)$-degree $(1,1)$.  A complex manifold admitting such a compatible triple of structures is called \emph{K\"ahler}. 

None of the conditions in the definition of a K\"ahler manifold can be relaxed; counterexamples exist in all cases. For example, there are complex manifolds that admit symplectic forms, but nonetheless are not K\"ahler (the compatibility condition cannot be satisfied). Despite these rigid requirements, though, many manifolds are naturally K\"ahler. All of the complex projective spaces~$\CP n$ are, when equipped with the Fubini--Study metric. Furthermore, a complex submanifold of a K\"ahler manifold is again K\"ahler; as such, so are all smooth projective algebraic varieties over~\(\mathbb C\). 

Just as the integrability condition on an almost-complex structure may be phrased in terms of differential operators as the identity $\bar\partial^2=0$, so the conditions for \(M\) to be K\"ahler can be expressed by a set of \emph{K\"ahler identities} between operators in its de~Rham complex. 

Define an operator $\Lambda^\dagger$ by the relation
\begin{equation*}
\Lambda^\dagger\colon \alpha \mapsto \omega \wedge \alpha,
\end{equation*}
where~$\omega$ is the K\"ahler form, and let~$\Lambda = \star \Lambda^\dagger \star$ be its adjoint.
This pair of operators are called \emph{Lefschetz operators;} 
they have pure degree $(1,1)$ and $(-1,-1)$ respectively. Their commutator $D = [\Lambda^\dagger, \Lambda]$ is a degree operator: it acts diagonally according to the rule
\begin{equation*} 
D|_{\mathrm\Omega^{p,q}} = (p+q-n).
\end{equation*}
To check that this is true, one can simply do a linear algebra calculation corresponding to one of the fibers of the bundle. For a Hermitian vector space that is one-dimensional over~\(\mathbb C\), the result is easy to see; the exterior algebra $\Lambda^* V$ is concentrated in degrees 0, 1, and~2, so that $[\Lambda^\dagger, \Lambda]$ must be zero in the middle dimension. The general statement is then obtained by induction on the complex dimension of~$V$; for details, the reader is referred to~\cite{Huybrechts}.

It follows that these three operators together form a basis for the Lie algebra $\SL(2,\mathbb R)$: they satisfy the commutation relations
\begin{equation*}
[D,\Lambda^\dagger] = 2 \Lambda^\dagger, \quad
[D, \Lambda] = -2 \Lambda, \quad
[\Lambda^\dagger, \Lambda ] = D. 
\end{equation*}
(Since the K\"ahler form is real, these operators actually act on the real cohomology and not merely on cohomology with complex coefficients; however, the latter is the relevant case for quantum mechanics.) The K\"ahler identities imply that K\"ahler manifolds admit a unique choice of Laplacian, that their de~Rham complexes are examples of \(\mathcal N=2\) supersymmetric quantum mechanics, and that the $\SL(2,\mathbb R)$ action defined above commutes with the Laplacian and hence descends to the cohomology. The essential identities, from which all others can be derived, are
\begin{gather}
[\Lambda, \bar\partial] = - i \partial^\dagger,
\quad
[\Lambda, \partial] = i \bar\partial^\dagger,
\label{Kahler1}
\\ \nonumber
[\Lambda, \partial^\dagger] = [\Lambda, \bar\partial^\dagger] = 0.
\end{gather}
To establish these, we refer to~\cite{DGMS}, who point out that K\"ahler metrics osculate to second order to the flat metric on~$\mathbb C^n$. This implies that any identity which is at most first-order in derivatives of the metric can be established by checking it on~$\mathbb C^n$. The calculation in the case of~\protect\eqref{Kahler1} is simple. 

Given these identities, it is straightforward to show that supersymmetry is enhanced:
\begin{align}
\{ \partial, \bar\partial^\dagger \} &= i \{ \partial, [\partial, \Lambda] \} \nonumber \\
&= i \left( \partial \partial \Lambda - \partial \Lambda \partial + \partial \Lambda \partial - \Lambda \partial \partial \right) \nonumber \\
&= 0 \quad (\text{since $\partial^2=0$}). 
\label{Kahler2}
\end{align}
Therefore, two mutually commuting copies of the \(\mathcal N=1\) algebra act. It remains to show that their respective Laplacians agree. This is again a quick calculation using~\protect\eqref{Kahler1}, the Jacobi identity, and $\{\partial, \bar\partial\} = 0$:
\begin{align}
\{ \partial, \partial^\dagger \} &=  i \{ \partial, [\Lambda, \bar\partial ] \} \nonumber \\
&= i \left(  [\Lambda, \{ \partial, \bar\partial \} ] +
\{ \bar\partial, [\partial, \Lambda] \} \right) \nonumber \\
&= \{ \bar\partial, i [ \partial, \Lambda ] \} \nonumber \\
&= \{ \bar\partial, \bar\partial^\dagger \} . 
\label{Kahler3}
\end{align}
As a consequence of~\protect\eqref{Kahler3}, there is a unique harmonic representative of each cohomology class, with pure $(p,q)$-degree. 
Recall that the representations of \(\mathcal N=2\) supersymmetry are of two types: zero-energy, one-dimensional ``short'' representations, and positive-energy, four-dimensional representations, which form ``squares'' in degrees $(p,q)$, $(p+1,q)$, $(p,q+1)$, and~$(p+1,q+1)$. 
The Betti numbers of the manifold can therefore be refined into Hodge numbers,
\begin{equation}
b_n = \sum_{p+q=n} h_{p,q},
\label{HodgeBetti}
\end{equation} 
which count the short multiplets. (We emphasize once again that this is not true on a generic complex manifold! \protect\cref{HodgeBetti} is a consequence of the degeneration of the Hodge-to-de-Rham spectral sequence, which is in turn a consequence of the enhanced supersymmetry guaranteed by the K\"ahler identities.)

One usually arranges the Hodge numbers of~\(M\) in a diamond, according to their bidegrees, as shown in \protect\cref{HodgeDiamond}.

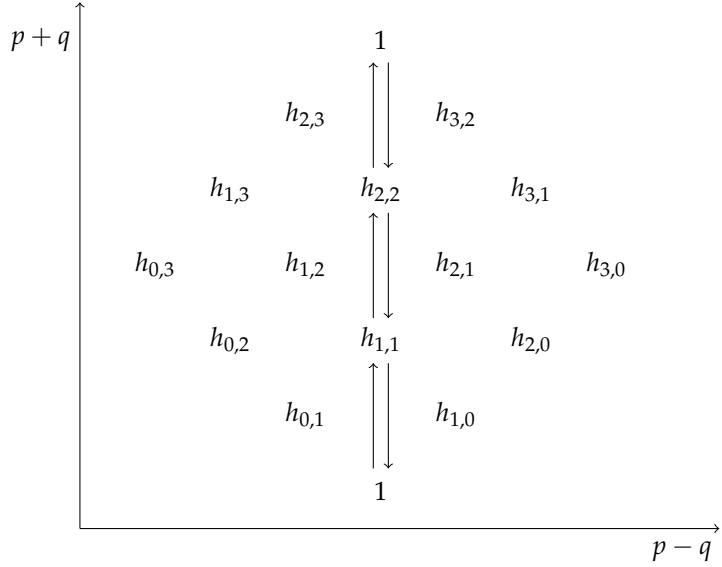
\begin{figure*}
\begin{center}
\begin{tikzpicture}[x=1cm,y=1cm]
\draw (0,0) node {\(1\)};
\draw (-1,1) node {$h_{0,1}$};
\draw (1,1) node {$h_{1,0}$};
\draw (-2,2) node {$h_{0,2}$};
\draw (0,2) node {$h_{1,1}$};
\draw (2,2) node {$h_{2,0}$};
\draw (-3,3) node{$h_{0,3}$};
\draw (-1,3) node{$h_{1,2}$};
\draw (1,3) node{$h_{2,1}$};
\draw (3,3) node{$h_{3,0}$};
\draw (-2,4) node {$h_{1,3}$};
\draw (0,4) node {$h_{2,2}$};
\draw (2,4) node {$h_{3,1}$};
\draw (-1,5) node {$h_{2,3}$};
\draw (1,5) node {$h_{3,2}$};
\draw (0,6) node {\(1\)};

\draw[->] (-4,-0.5) -- (-4,6) node[left] {$p+q$} -- (-4,6.5);
\draw[->] (-4,-0.5) -- (4,-0.5) node[below] {$p-q$} -- (4.5,-0.5);

\draw[->] (-0.1,2.3) -- (-0.1,3.7);
\draw[->] (0.1,3.7) -- (0.1,2.3);

\draw[->] (-0.1,0.3) -- (-0.1,1.7);
\draw[->] (0.1,1.7) -- (0.1,0.3);

\draw[->] (-0.1,4.3) -- (-0.1,5.7);
\draw[->] (0.1,5.7) -- (0.1,4.3);

\end{tikzpicture}
\end{center}
\caption[The Hodge diamond of a connected K\"ahler threefold]{The Hodge diamond for a generic connected K\"ahler threefold. The raising and lowering operators, indicated by vertical arrows for $p-q=0$, are the Lefschetz operators. Many of the indicated Hodge numbers are not independent, being related by discrete symmetries.}
\label{HodgeDiamond}
\end{figure*}

The K\"ahler identities~\protect\eqref{Kahler1} further imply that the Lefschetz operators commute with the Laplacian; checking this is easy, and proceeds using the Jacobi identity. The action of $\SL(2,\mathbb R)$ they provide therefore maps harmonic forms to harmonic forms, and so descends to the cohomology.

As such, a nice way to think of the Hodge diamond of a K\"ahler manifold is as a weight diagram for a representation of the rank-two Lie algebra $\SU(2) \times \operatorname U(1)$. The algebra acts via the operators $\Lambda^\dagger$, $\Lambda$, $D=P+Q-n$, and~$P-Q$. The Hodge numbers are then arranged by their weights with respect to the Cartan of this algebra. This picture will generalize nicely to the hyper-K\"ahler case.

The Lefschetz operators have extra consequences for the topology of~\(M\), in addition to those that follow from enhanced supersymmetry. Since they commute with $P-Q$, it follows immediately that each vertical slice of the Hodge diamond (when $(p-q)$-degree is displayed horizontally) is a weight diagram for $\SL(2,\mathbb R)$. As a consequence, the odd- and even-degree Betti numbers of a K\"ahler manifold are separately monotonically nondecreasing toward the middle degree. This is the content of the hard Lefschetz theorem, which is straightforward to understand from the standpoint of representation theory.

While \(\mathcal N=(2,2)\) superconformal symmetry in two dimensions is enough to imply things like bigrading and Poincar\'e duality---we return to these points later---it is not sufficient to establish properties like hard Lefschetz. Indeed, $\mathcal N =(2,2)$ theories may not even have operators with the appropriate quantum numbers to correspond to a K\"ahler class---superconformal minimal models are examples of this.

Since the volume form of a compact K\"ahler manifold has bidegree $(n,n)$, Poincar\'e duality defines a pairing between $\mathrm H^{p,q}$ and $\mathrm H^{n-p,n-q}$, implying immediately that the corresponding Hodge numbers are equal. Poincar\'e duality acts on the Hodge diamond~(\protect\cref{HodgeDiamond}) by reflection through the center point, or equivalently by simultaneously flipping the sign of $P-Q$  and $D$.

An additional symmetry of the Hodge diamond comes from considering the action of complex conjugation on the de~Rham complex. This takes forms of degree~$(p,q)$ to forms of degree~$(q,p)$, while preserving total degree.  It therefore acts on the Hodge diamond by reflecting left and right, flipping the sign of~$P-Q$ while fixing~$D$.

From the representation-theory standpoint, we are considering unitary representations of \(\SU(2)\times\operatorname U(1)\). Every representation of \(\SU(2)\) is 
symmetric with respect to the Weyl group $\mathbb Z/2\mathbb Z$, which acts on the Hodge diamond by reflections about the horizontal axis. 
Furthermore, for unitary representations of U(1), we also insist on charge-conjugation symmetry, which  reflects the Hodge diamond about the vertical axis. We therefore recover the same $(\mathbb Z/2\mathbb Z)^2$ symmetry that is generated by Poincar\'e duality and complex conjugation.

Lastly, the so-called Hodge-Riemann bilinear relation is a compatibility condition 
between the Lefschetz action and these discrete symmetries: it states that, for highest-weight states, the Poincar\'e dual of the complex conjugate is the same (up to a complex scalar phase) as the state obtained by applying the \(\SU(2)\) raising operator multiple times. In pictures, the following should commute up to a scalar:
\begin{center}
\begin{tikzpicture}[scale=0.85]
\drawdiamond2
\node at (0.5,1.5) {\(\bullet\)};
\node at (0.5,-1.5) {\(\bullet\)};
\node at (-0.5,1.5) {\(\bullet\)};
\node at (-0.5,-1.5) {\(\bullet\)};
\draw [->] (0.5,-1.5) ++(-0.1,0.1) -- (0,0) node [right] {Poincar\'e} -- ({-0.5+0.1},{1.5-0.1});
\draw [->] (-0.5,-1.5)  ++(0.1,0) -- (0,-1.5) node [below] {conjugation} -- ({0.5-0.1},-1.5);

\node at (-0.5,-0.5) {\(\bullet\)};
\node at (-0.5,0.5) {\(\bullet\)};
\node at (-0.5,1.5) {\(\bullet\)};
\draw [->] (-0.5,-1.5) ++ (0,0.1) -- (-0.5,-1) node [left] {\(\Lambda^\dagger\)} -- (-0.5,{-0.5-0.1});
\draw [->] (-0.5,-0.5) ++ (0,0.1) -- (-0.5,0) node [left] {\(\Lambda^\dagger\)} -- (-0.5,{0.5-0.1});
\draw [->] (-0.5,0.5) ++ (0,0.1) -- (-0.5,1) node [left] {\(\Lambda^\dagger\)} -- (-0.5,{1.5-0.1}); 
\end{tikzpicture}
\end{center}
The equation that expresses this is 
\begin{equation}
\star (\Lambda^\dagger)^j \psi = \frac{(-1)^{k(k+1)/2} j!}{(n-k-j)!} (\Lambda^\dagger)^{n-k-j} \bar\psi,
\end{equation}
where~$\psi$ is a $k$-form such that $\Lambda\psi=0$. 
See~\cite{Huybrechts} for more details, as well as a proof of the K\"ahler identities that starts from this formula.

A K\"ahler manifold \(M\) is \emph{Calabi-Yau} if it has a nowhere-vanishing holomorphic---and therefore harmonic---$(n,0)$-form. (An equivalent condition is that the canonical bundle be trivial.) Therefore, a Calabi-Yau manifold will have $h_{n,0} = h_{0,n} = 1$. 

There are strong constraints on the fundamental groups of K\"ahler and Calabi-Yau manifolds as well. Just to substantiate this, we recall that any Calabi-Yau~\(M\) has a finite cover of the form $T\times N$, where \(n\) is a simply connected Calabi-Yau and~$T$ is a complex torus. This fact is called \emph{Bogomolov decomposition}~\cite{Bogomolov-en}. In this paper, however, we will always restrict ourselves to the simply connected case.

The next level of possible structure is provided by \emph{hyper-K\"ahler manifolds}, which are Riemannian manifolds equipped with three distinct complex structures. We will call these $I$, $J$, and~$K$. They should satisfy the multiplication table of the quaternions: for instance, $IJ = - JI = K$. Further, when equipped with any of these complex structures, the manifold should be K\"ahler. The tangent spaces to a hyper-K\"ahler manifold are thus quaternionic vector spaces; while this is necessary, it is not sufficient. (Again, it is possible for compatibility conditions between the complex structures and the metric to fail; so-called quaternionic K\"ahler manifolds are examples of this~\cite{Dancer}.)

Hyper-K\"ahler manifolds are highly structured, and few compact examples are known: the hyper-K\"ahler structure imposes strong constraints on the topology. We review some of these constraints briefly here.

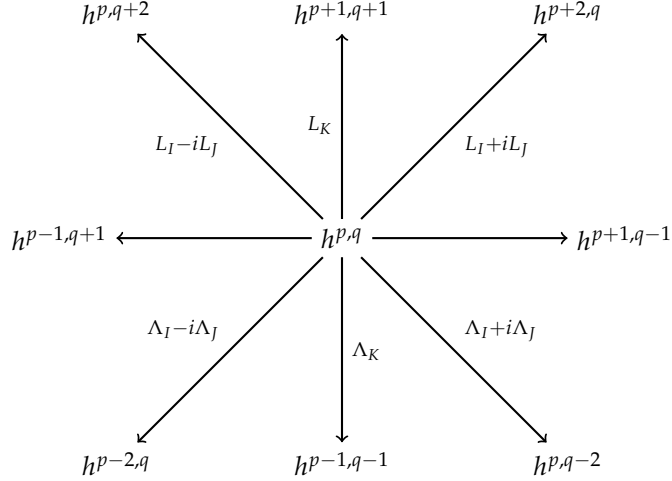
\begin{figure*}[t]
\begin{center}
\begin{tikzpicture}[x=3cm,y=3cm]

\node at (0,0) (O) {\(h^{p,q}\)};
\draw (O)++(1,1) node (O++) {\(h^{p+2,q}\)};
\draw (O)++(1,-1) node (O+-) {\(h^{p,q-2}\)};
\draw (O)++(-1,1) node (O-+) {\(h^{p,q+2}\)};
\draw (O)++(-1,-1) node (O--) {\(h^{p-2,q}\)};
\draw (O)++(0,1) node (O0+) {\(h^{p+1,q+1}\)};
\draw (O)++(0,-1) node (O0-) {\(h^{p-1,q-1}\)};
\draw (O)++(1,0) node[right] (O+0) {\(h^{p+1,q-1}\)};
\draw (O)++(-1,0) node[left] (O-0) {\(h^{p-1,q+1}\)};

\draw [thick,->] (O) -- ++(0.5,0.5) node[below right] {\(\scriptstyle L_I + i L_J \)} -- (O++);
\draw [thick,->] (O) -- ++(0.5,-0.5) node[above right] {\(\scriptstyle \Lambda_I + i \Lambda_J \)} -- (O+-);
\draw [thick,->] (O) -- ++(-0.5,-0.5) node[above left] {\(\scriptstyle \Lambda_I - i \Lambda_J \)} -- (O--);
\draw [thick,->] (O) -- ++(-0.5,0.5) node[below left] {\(\scriptstyle L_I - i L_J \)} -- (O-+);
\draw [thick, ->] (O) -- ++(0,0.5) node [left] {\(\scriptstyle L_K \)} -- (O0+);
\draw [thick, ->] (O) -- ++(0,-0.5) node [right] {\(\scriptstyle \Lambda_K \)} -- (O0-);
\draw [thick, ->] (O) -- (O+0);
\draw [thick, ->] (O) -- (O-0);
\end{tikzpicture}
\end{center}
\caption{The action of the \(B_2\) root system on the Hodge diamond of a hyper-K\"ahler manifold.}
\label{B2}
\end{figure*}

The three K\"ahler classes corresponding to the three complex structures define operators on the de~Rham complex which close into the action of a Lie algebra, analogous to the Lefschetz $\SL(2)$ action in the K\"ahler case. The algebra that applies in this case is $\SO(5)$; it was first constructed by Verbitsky~\cite{Verbitsky-en}. For the reader's convenience, we explain his construction here. 
We use the notation
\[ L_I = \Lambda_I^\dagger\colon \alpha \mapsto \omega_I \wedge \alpha \]
(and its analogues) for the Lefschetz-type operators. (One should be careful here; $\Lambda_I$ is the adjoint of~$L_I$ with respect to complex structure~$I$. However, if one works with respect to a fixed complex structure, not all pairs of Lefschetz operators will be adjoints; this is why we prefer the notation~$L$ for raising operators in the sequel.)

Following Verbitsky, let's define
\[ M_{IJ} = [L_I, \Lambda_J], \]
where the indices range over pairs of complex structures. Clearly,
\[ M_{II} = M_{JJ} = M_{KK} = D. \]
Moreover, $[L_I, L_J]=[\Lambda_I,\Lambda_J] = 0$ for any pair of indices $I$ and~$J$, since two-forms commute. The~\(M\) operators carry homological degree zero, and therefore commute with~$D$.
 
One identity requires a nontrivial argument: $M_{IJ} = - M_{JI}$. Just as in the K\"ahler case, commutation relations like this one are really statements about linear algebra, and can be proved by an explicit calculation in the case of a one-dimensional quaternionic vector space.

Once this is established,
the remaining commutation relations can be fixed by some quick calculations with the Jacobi identity:
\begin{gather}
[M_{IJ}, \Lambda_I] = - 2\Lambda_J , \quad [M_{IJ}, L_J] = 2 L_I; \nonumber \\
[M_{IJ}, \Lambda_K] = [M_{IJ}, L_K = 0]; \\
[M_{IJ}, M_{JK}] = 2 M_{IK}. \nonumber 
\end{gather}
These operators therefore close into a ten-dimensional Lie algebra of rank two. Let's choose to fix the complex structure $K$ on the manifold, and the corresponding Cartan subalgebra spanned by~$D$ and~$-iM_{IJ}$. It is then straightforward to show that the weights of the adjoint representation form the root system $B_2$, corresponding to the algebra $\SO(5)$. Moreover, the bigrading defined by weights for this choice of Cartan on the de~Rham complex coincides with the Hodge bigrading by homological degree and $(P-Q)$, defined with respect to complex structure~$K$. That is,
\[
-i M_{IJ} = (P-Q).
\]
(See \protect\cref{B2}.)

The picture of the Hodge diamond as a weight diagram for a Lie algebra representation furnished by the cohomology therefore generalizes beautifully to the hyper-K\"ahler case. The Lie algebra in question is now $\SO(5)$, rather than $\SU(2) \times\operatorname U(1)$, and the restrictions imposed on the Hodge numbers by representation theory are accordingly more severe. For example, only three irreducible representations of $\SO(5)$ can fit inside the Hodge diamond of a hyper-K\"ahler four-manifold, and only six inside that of an eight-manifold. (See \protect\cref{SO5-irreps}.) The restrictions coming from the~$\SO(5)$ action are best thought of as inequalities, rather than equalities, between Hodge numbers: they are the analogue of the monotonicity properties guaranteed for Betti numbers of K\"ahler manifolds by the hard Lefschetz theorem.

The discrete symmetries of the Hodge diamond are also enhanced in the hyper-K\"ahler case. The Weyl group of $B_2$ is $(\mathbb Z/2\mathbb Z)^3$; since this group has order eight, Hodge numbers are repeated up to eight times. The new identity is
\[
h^{p,q} = h^{p,2n-q} \quad (n = \dim_{\mathbb C}M).
\]

\begin{figure*}[t]
\centering
\begin{subfigure}{0.32\textwidth}\centering
\begin{tikzpicture}[scale=0.85]
\drawdiamond1
\node at (0,0) {\(\bullet\)};
\node at (0,-1) {\(\phantom{\bullet}\)};
\node at (0,1) {\(\phantom{\bullet}\)};
\end{tikzpicture}
\caption{\(\mathbf1\)}
\end{subfigure}\begin{subfigure}{0.32\textwidth}\centering
\begin{tikzpicture}[scale=0.85]
\drawdiamond1
\node at (0.5,0.5) {\(\bullet\)}; 
\node at (-0.5,0.5) {\(\bullet\)};
\node at (0.5,-0.5) {\(\bullet\)};
\node at (-0.5,-0.5) {\(\bullet\)};
\node at (0,-1) {\(\phantom{\bullet}\)};
\node at (0,1) {\(\phantom{\bullet}\)};
\end{tikzpicture}
\caption{\(\mathbf4\)}
\end{subfigure}\begin{subfigure}{0.32\textwidth}\centering
\begin{tikzpicture}[scale=0.85]
\drawdiamond1
\node at (0,0) {\(\bullet\)}; 
\node at (1,0) {\(\bullet\)};
\node at (-1,0) {\(\bullet\)};
\node at (0,-1) {\(\bullet\)};
\node at (0,1) {\(\bullet\)};
\end{tikzpicture}
\caption{\(\mathbf5\)}
\end{subfigure}

\begin{subfigure}{0.32\textwidth}\centering
\begin{tikzpicture}[scale=0.85]
\drawdiamond2
\node at (0,0) {\(\bullet\bullet\)};
\node at (1,0) {\(\bullet\)};
\node at (-1,0) {\(\bullet\)};
\node at (0,-1) {\(\bullet\)};
\node at (0,1) {\(\bullet\)};
\node at (1,1) {\(\bullet\)}; 
\node at (1,-1) {\(\bullet\)};
\node at (-1,1) {\(\bullet\)};
\node at (-1,-1) {\(\bullet\)};
\end{tikzpicture}
\caption{\(\mathbf{10}\)}
\end{subfigure}
\begin{subfigure}{0.32\linewidth}\centering
\begin{tikzpicture}[scale=0.85]
\drawdiamond2

\node at (0,0) {\(\bullet\bullet\)};
\node at (1,0) {\(\bullet\)};
\node at (-1,0) {\(\bullet\)};
\node at (0,-1) {\(\bullet\)};
\node at (0,1) {\(\bullet\)};
\node at (1,1) {\(\bullet\)}; 
\node at (1,-1) {\(\bullet\)};
\node at (-1,1) {\(\bullet\)};
\node at (-1,-1) {\(\bullet\)};
\node at (2,0) {\(\bullet\)};
\node at (-2,0) {\(\bullet\)};
\node at (0,2) {\(\bullet\)};
\node at (0,-2) {\(\bullet\)};
\end{tikzpicture}
\caption{\(\mathbf{14}\)}
\end{subfigure}\begin{subfigure}{0.32\linewidth}\centering
\begin{tikzpicture}[scale=0.85]
\drawdiamond2

\node at (0.5,0.5) {\(\bullet\bullet\)}; 
\node at (-0.5,0.5) {\(\bullet\bullet\)};
\node at (0.5,-0.5) {\(\bullet\bullet\)};
\node at (-0.5,-0.5) {\(\bullet\bullet\)};
\node at (1.5,0.5) {\(\bullet\)};
\node at (1.5,-0.5) {\(\bullet\)};
\node at (-1.5,0.5) {\(\bullet\)};
\node at (-1.5,-0.5) {\(\bullet\)};
\node at (0.5,1.5) {\(\bullet\)};
\node at (0.5,-1.5) {\(\bullet\)};
\node at (-0.5,1.5) {\(\bullet\)};
\node at (-0.5,-1.5) {\(\bullet\)};
% invisible dots for alignment
\node at (0,-2) {\(\phantom{\bullet}\)};
\node at (0,2) {\(\phantom{\bullet}\)};
\end{tikzpicture}
\caption{\(\mathbf{16}\)}
\end{subfigure}
\caption[Some weight diagrams for representations of~$\SO(5)$]{\(\SO(5)\) irreps that can fit inside an 8-dimensional Hodge diamond. The first three can also fit inside a 4-dimensional Hodge diamond. The $\mathbf{16}$ representation cannot occur for simply connected eight-manifolds.}\label{SO5-irreps}
\end{figure*}
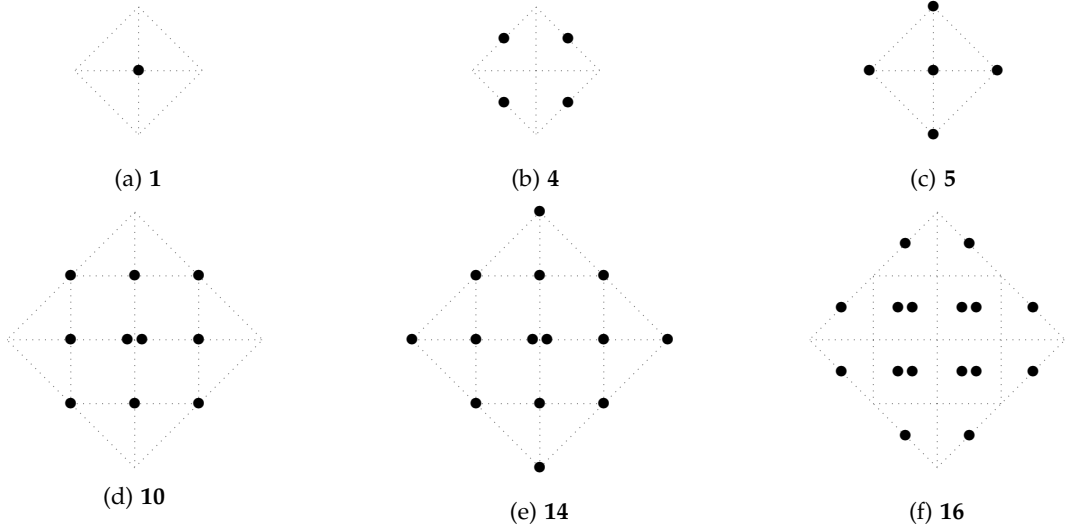

Moreover, the de~Rham complex on a hyper-K\"ahler manifold has differential operators generating an action of the $\mathcal N=4$ supersymmetry algebra. Recall that, on a K\"ahler manifold, the real operators were the exterior derivative~$d$ and its partner $d^c = -JdJ$. On a hyper-K\"ahler manifold, we can write the four operators
\begin{equation*}
d_1 = d, \quad
d_I = -IdI, \quad 
d_J = -JdJ, \quad 
d_K = -KdK.
\end{equation*}
The commutation relations for these operators and their adjoints follow trivially from those for~$d$ and~$d^c$, since for any pair of them one can be obtained from the other by conjugating with a specific complex structure.

To emphasize the perspective we have tried to bring out in this review: the requirements on the target space of supersymmetric sigma models, which are usually understood in terms of special holonomy, can also fruitfully be thought about in terms of dimensional reduction to supersymmetric quantum mechanics. Since dimensional reduction preserves the number of real supercharges, a minimally supersymmetric \(d\)-dimensional sigma model can be defined on a target manifold~\(M\) if and only if the appropriate number of  supercharges ($2^{\lfloor d/2 \rfloor}$) act on the corresponding supersymmetric quantum mechanics (furnished by the de~Rham complex of~\(M\)). Although this condition is obviously necessary, it is perhaps surprising that it is sufficient. Furthermore, as was first pointed out by~\cite{FOFKS}, the perspective of dimensional reduction offers a natural explanation of Lefschetz-type symmetry algebras, just as it does for $R$-symmetries in general supersymmetric theories. In fact, one should think of Lefschetz symmetries in geometry and $R$-symmetries in supersymmetry as the same kind of object. We return to this in Section~\ref{fofks}.

\section{Homotopy theory over a field and Sullivan minimal models}
\label{sullivan}

In this section, we introduce readers to some rudiments of rational homotopy theory, as developed by Sullivan and collaborators, which formalizes  the idea of the de~Rham complex as an algebraic model of a manifold. We closely follow the exposition in~\cite{DGMS}; other more exhaustive references are~\cite{Sullivan-Infinitesimal,Griffiths-Morgan}. We attempt to use quantum-mechanical language when possible, and offer a reinterpretation of Sullivan's minimal models  in physical language, which sheds intuitive light on the origin of Massey products.

The de~Rham complex is a commutative differential graded algebra (briefly, a CDGA). That is, it is a graded algebra $\Omega = \bigoplus_{i\ge 0} \Omega^i$ over a ground field~${k}$ of characteristic zero, 
whose multiplication preserves the grading, equipped with a differential of degree one that is a derivation for the product:
\begin{equation}
d(x\cdot y) = dx\cdot y + (-)^{|x|} x \cdot dy.
\end{equation}
Here $x$ and~$y$ are assumed to have homogeneous degree, and $|x| = \deg x$. 
The product is further taken to be commutative in the graded sense familiar from supersymmetry:
\begin{equation}
x\cdot y = (-)^{|x||y|} y\cdot x.
\end{equation}
In words, a CDGA is a chain complex equipped with a compatible  notion of multiplication.

The cohomology of a CDGA is defined as usual, and is a commutative graded algebra. It is itself a CDGA, if we understand its differential to be the zero map. We will say that $\Omega$ is \emph{connected} if $\mathrm H^0(\Omega) = k$, and \emph{simply connected} if it is connected and $\mathrm H^1(\Omega) = 0$.

Homomorphisms of CDGAs are algebra homomorphisms that are also morphisms of chain complexes. A CDGA homomorphism is a \emph{quasi-isomorphism} if the map it induces on cohomology is an isomorphism. We will consider two CDGAs to be \emph{equivalent} if there is a zigzag of quasi-isomorphisms going from one to the other.

There is a collection of ideals $I_j$ in $\Omega$, defined by
\begin{equation}
I_j = \bigoplus_{i\ge j} \Omega^i \subset \Omega. 
\end{equation}
The ideal of \emph{decomposable elements} is $I_1\cdot I_1 \subseteq I_2$. 
It consists of elements of~$\Omega$ that can be written as products of elements of strictly lower degree. 

Let \(M\) be a simply connected CDGA. We will say that \(M\) is \emph{minimal} if (1) as an algebra, it is the free (graded)-commutative algebra generated by a finite number of elements of homogeneous degree; (2) the differential is decomposable, i.e.,
\[ \im d \subseteq I_1\cdot I_1.\]
From these conditions, it follows that $M^0 = k$ and $M^1 = 0$. The notion of a minimal CDGA can be generalized to the non-simply connected case, but for simplicity we will always deal with simply connected spaces and CDGAs in this paper. 

Like any CDGA, a minimal CDGA~\(M\) over~\(\mathbb C\) can be thought of as a supersymmetric quantum mechanics. If we choose an inner product on~\(M\) (or if one occurs naturally), so that the adjoints of operators can be defined, there is an obvious action of the relevant \(\mathcal N=1\) supersymmetry algebra. The requirement that~\(M\) be freely generated as a commutative algebra just says that it is a Fock space: each state can be built up from the (unique) vacuum state in degree zero by the free action of a finite number of bosonic and fermionic creation operators. This is what we would expect for the Hilbert space of a free field theory, even in $0+1$ dimensions. Moreover, once states are identified with operators in this way, the algebra structure is just the obvious product of operators. Structures analogous to this one can be seen (for instance) in the description of the elliptic genus in two-dimensional $\mathcal N=(2,2)$ theories in terms of a plethystic exponent.

Lastly, the condition that the differential be decomposable simply insists that we include no ``irrelevant'' creation operators that are $Q$-exact and so contribute only to the excited spectrum of the theory.

A minimal CDGA is a \emph{minimal model} of~$\Omega$ if there is a quasi-isomorphism 
\[ f\colon M \to \Omega. \]
The utility of minimal models comes from a theorem of Sullivan, which shows that every simply connected CDGA has a minimal model that is unique up to isomorphism. Furthermore, two CDGAs are equivalent if and only if their minimal models are isomorphic.
As such, minimal models cure the ``ambiguity'' of the de~Rham complex, and furnish a true invariant of a space. It is as though one had a way of choosing a unique triangulation (and therefore a unique simplicial chain complex) representing each topological manifold.

A minimal model of a CDGA can be thought of as the unique ``smallest'' supersymmetric quantum mechanics that (1) reproduces the correct BPS spectrum and (2) has the Hilbert space of a free theory. 
One can construct a minimal model degree by degree, adding generators only as necessary to generate or kill new cohomology. We will give examples below.

Given a minimal model~\(M\), we can consider its spaces of \emph{indecomposable elements} in each degree:
\[
\pi_i \doteq M^i/(I_1 \cdot I_1)^i.
\]
These vector spaces (more precisely, their duals) are the \emph{$k$-de~Rham homotopy groups} of~\(M\). By another remarkable theorem, if~\(M\) is the minimal model of the de~Rham complex of a space~\(X\) (with coefficients in~$k$), then the de~Rham homotopy groups of~\(M\) \emph{are} the homotopy groups of~\(X\), up to information about torsion:
\begin{equation}
\pi_i(M) \cong \pi_i(X) \otimes k.
\end{equation}
In fact, the minimal model contains all of the information of the homotopy type of~\(X\) over the field~$k$; if $k=\mathbb R$, in the words of~\cite{DGMS}, \(M\) ``\emph{is} the real homotopy type'' of~\(X\). While it does not preserve the complete homotopy type of~$X$ (since there is no information about torsion), it still provides a powerful algebraic invariant.

\begin{example}
The cohomology of the sphere $S^n$ is a simply connected CDGA with zero differential and zero product, with generators in degrees $0$ and~\(n\) only ($n>1$). If \(n\) is odd, this CDGA is already minimal, and so is its own minimal model.

If \(n\) is even, the minimal model \(M\) must have a free bosonic generator (call it~$x$) in degree~\(n\). The generator \(x\) must be closed in order to survive to the cohomology. Therefore \(x^2\) is closed; however, since it does \emph{not} survive to cohomology, it must be exact. We are therefore forced to introduce a generator~\(y\) in degree \(2n-1\), such that \(\dif y=x^2\).  This kills all higher powers of~$x$ (\(\dif{}(yx^n) = x^{n+2}\)), and therefore (since \(y\) is fermionic, so that \(y^2=0\)) we have constructed the minimal model.
\end{example}

It is \emph{a priori} not obvious that these minimal models carry any information about spheres as spaces; they are minimal models of the cohomology, rather than of the de~Rham complexes, of the spheres. However, it is possible to show that spheres are formal spaces, in the sense discussed in the introduction. To recall: a minimal CDGA~\(M\) is \emph{formal} if there is a CDGA quasi-isomorphism
\[ f\colon M \to\mathrm H^\bullet(M). \]
The algebras we constructed for spheres are therefore formal by definition; to show that spheres are formal spaces requires showing that their de~Rham complexes are equivalent to their cohomology. 

Once this is done, however, the minimal models we constructed above \emph{are} the minimal models of the \(n\)-spheres, and their de~Rham homotopy groups are the homotopy groups of spheres, after tensoring with~\(\mathbb R\). This shows that all the higher homotopy of odd-dimensional spheres is torsion, whereas the higher homotopy of even-dimensional spheres has rank one in degree $2n-1$ (corresponding for \(n=2\) to the Hopf fibration). Formality is thus a remarkably powerful tool.

\begin{example} 
We now discuss a minimal CDGA that fails to be formal. The example is, once again, due to~\cite{DGMS}. Let~\(M\) be freely generated by \(x\) and~\(y\) in degree two and \(f\) and~\(g\) in degree three. The differential will be defined by 
\begin{align}
\dif x = \dif y &= 0 &
\dif f &= x^2 &
\dif g &= xy.
\end{align}

This algebra cannot be formal, because it has a triple Massey product. The product $xxy$ is zero in cohomology for two reasons (because $xx$ and~$xy$ are both exact), and the Massey product measures the ``difference'' between the two. It is defined by the element 
\[ m(x,x,y) =  fy - gx, \]
which is closed but not exact in degree five. The Massey product is ambiguous up to the ideal
\[ x \cdot \mathrm H^3(M) + y \cdot \mathrm H^3(M),\] but this vanishes since \(\mathrm H^3(M)=0\).

\end{example}

\begin{figure*}

\newcommand\downpair[4]{
  \node at (#1,-#2+1) {\(#3\vphantom{y^2}\)};
  \node at (#1,{-#2}){\(#4\vphantom{y^2}\)};
  \draw [->] (#1,-#2+1) ++ (0,-0.25) -- (#1,{-#2+0.25});
}

\newcommand\singlet[3]{
  \node at (#1,-#2+1) {\(#3\vphantom{y^2}\)};
}

\newcommand\framesinglet[3]{
  \node at (#1,-#2+1) {\framebox{\(#3\vphantom{y^2}\)}};
}

\newcommand\diagpair[4]{
  \node at (#1,-#2+1) {\(#3\vphantom{y^2}\)};
  \node at ({#1+1},{-#2}){\(#4\vphantom{y^2}\)};
  \draw [->] (#1,-#2+1) ++ (0.1,-0.25) -- ({#1+0.9},{-#2+0.25});
}

\newcommand\axisnum[1]{
  \node at (0.5,-#1+1) {\(#1\)};
}

\begin{center}
\begin{tikzpicture}[x=5.75em,y=2.8em]

\node at (0,-1) {\(\deg=\)};
%\axisnum1
\axisnum2
\axisnum3
\axisnum4
\axisnum5
\axisnum6
\axisnum7
\axisnum8
\node at (0.5,-8) {\(\vdots\)};
\draw (0.7, -0.5) -- (0.7, -8.5);

\singlet 12 x
\singlet 22 y

\downpair 13 {f}{x^2}
\downpair 23 g{xy}
\singlet 34 {y^2}
\diagpair 15 {fx}{x^3}
\diagpair 25 {gx+fy}{x^2y}
\framesinglet 35 {gx-fy}
\downpair 45 {gy}{xy^2}
\diagpair 16 {fg}{x^2g-xyf}
\singlet 56 {y^3}
\downpair 17 {x^2f}{x^4}
\downpair 37 {x^2g+xyf}{x^3y}
\downpair 47 {y^2f+xyg}{x^2y^2}
\framesinglet 57 {y^2f-xyg}
\downpair 67 {y^2g}{xy^3}

\singlet 28 {\cdots}
\singlet 58 {\cdots}

\singlet {3.5}9 {\vdots}
\end{tikzpicture}
\end{center}

\caption[A non-formal example of a minimal-model CDGA]{A picture of the low-degree portion of our non-formal example of a minimal CDGA. The differential is indicated by arrows, so that the long and short representations are visible. We have been sloppy about including coefficients; the indicated relationships are true up to overall scalars.}
\label{informal}
\end{figure*}

We draw a picture of this example in \protect\cref{informal}. To give an explanation which may be more tangible to the reader than the formal definition, Massey products have to do with entanglement: in a minimal model of a CDGA, they are supersymmetric vacuum states that are entangled between different oscillator degrees of freedom. 

To say this at more length, recall that entanglement is a property of a state in a Hilbert space, when viewed with respect to a basis that exhibits that Hilbert space as a tensor product of factors corresponding to subsystems. 

In a minimal-model CDGA~$M$, there are two natural choices of basis: one is the tensor-product basis corresponding to the different oscillator degrees of freedom that make up the system. With respect to this basis, $M$ is freely generated as a graded polynomial algebra. 
The other basis is the basis consisting of energy eigenstates, which shows that $M$ is a direct sum of irreducible representations of supersymmetry.

One of these bases is natural from the standpoint of the action of the de~Rham algebra, and the other is natural from the standpoint of the multiplication. These bases may not agree; if this happens, the result is that the product of operators corresponding to zero-energy states may not be an energy eigenstate at all, but rather may be a quantum superposition of zero-energy and positive-energy states. For the same reason, a zero-energy state may not be a tensor product state of different oscillators, but may only appear as an entangled state. As always, studying the appropriate picture (\protect\cref{informal}) should make this clear.

In a supersymmetric quantum mechanics whose Hilbert space is a Sullivan minimal-model CDGA, the Hilbert space is that of a finite collection of bosonic and fermionic harmonic-oscillator degrees of freedom. The cohomology classes corresponding to Massey products, like the state \(gx - fy\) in our example, are vacua of this system in which the oscillator subsystems exhibit nontrivial entanglement with one another. 
As is clear from \protect\cref{informal}, no tensor product state of the form
\[ \ket f \otimes \ket\cdots \]
can be a vacuum state, since \(\ket f\) is $Q$-exact. However, an entangled combination of states like this one \emph{is} a new supersymmetric vacuum state.

Let us also say a few words about how one should think about the meaning of homotopy equivalence from a physical perspective. Let~$f$ and~$g$ be two homomorphisms between CDGAs~$A$ and~$B$. The two are \emph{homotopic} if there exists a map $H:A \rightarrow B$ of degree $-1$, such that
\[ f - g = d_Bh + hd_A. \]
Now, an example of a homotopy equivalence between CDGAs $A$ and~$B$ is a pair of projection and inclusion maps
\[ A \xrightarrow{p} B, \quad B \xhookrightarrow{i} A, \]
such that $ip$ is homotopic to $1_A$; in other words, $ip-1_A$ (which is projection onto the orthogonal complement of $B\subset A$) is $Q$-exact.

The reader may wonder why the inclusion of the cohomology into a CDGA as the set of zero-energy states does not necessarily define a homotopy equivalence between the two. The answer is that the maps $i$ and~$p$ above are required to be homomorphisms of algebras; therefore, one cannot simply project onto any linear subspace. The kernel of~$p$ must be an ideal.

This requirement has a clear physical interpretation as well: it corresponds to our normal ideas about integrating out degrees of freedom. If the map $p$ sends a state $\ket\psi$ to zero, indicating (for instance) that $\ket\psi$ has an energy which is above a certain cutoff value, it should also send to zero states obtained by adding additional particles to~$\ket\psi$. In the context of CDGAs that are Fock spaces (that is to say, Sullivan algebras that are not necessarily minimal), one should only discard states corresponding to an entire harmonic-oscillator degree of freedom, when they all have nonzero energy (so that that degree of freedom is massive and can be integrated out).

It is therefore possible to think about (at least a subset of) homotopy equivalences between CDGAs, at least heuristically, as renormalization group flows. Since the RG scale is a continuous parameter of the physical theory, this is in line with the intuitive idea that a homotopy between physical theories should be a path in the moduli space.

Now let us review the main result of~\cite{DGMS}:
\begin{theorem}
Compact K\"ahler manifolds are formal.
\end{theorem}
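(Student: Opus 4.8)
The plan is to exhibit the de~Rham CDGA of a compact Kähler manifold~\(M\) as equivalent, through a short zigzag of algebra quasi-isomorphisms, to its cohomology equipped with the zero differential; this is exactly the definition of formality given above. The single nontrivial input will be the \(\dif\dif{}\conj\)-lemma, and everything else is formal homological algebra. Recall that on a Kähler manifold one sets \(\dif{}\conj = -i(\partial - \bar\partial)\), so that \(\dif{}\dif{}\conj = 2i\,\partial\bar\partial\); the operators \(\dif{}\) and \(\dif{}\conj\) each square to zero and anticommute. The \(\dif\dif{}\conj\)-lemma is the assertion that a form which is \(\dif{}\)-closed, \(\dif{}\conj\)-closed, and exact for either differential is already \(\dif{}\dif{}\conj\)-exact; equivalently \(\ker\dif{}\cap\ker\dif{}\conj\cap(\im\dif{}+\im\dif{}\conj) = \im\dif{}\dif{}\conj\).

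First I would establish the \(\dif\dif{}\conj\)-lemma, and I expect this to be the main obstacle, in the sense that it is the only place where analysis and the Kähler condition are genuinely used; the remaining steps are purely algebraic. This is precisely where the Kähler hypothesis enters: the Kähler identities~\eqref{Kahler1} force the Laplacians \(\Delta_\partial\) and \(\Delta_{\bar\partial}\) to coincide, as in~\eqref{Kahler3}, so that a single space of harmonic forms is simultaneously closed under \(\partial\), \(\bar\partial\), and hence under \(\dif{}\) and \(\dif{}\conj\). Feeding the resulting Hodge decomposition into a form that is \(\dif{}\)- and \(\dif{}\conj\)-closed isolates a harmonic part together with \(\dif{}\dif{}\conj\)-exact correction terms, and the lemma drops out. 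In the supersymmetric language of the previous sections, this is the statement that the \(\mathcal N=2\) algebra built from \(\partial\) and \(\bar\partial\) supports only ``square'' and ``singlet'' representations and no staircases, so that the two supercharges have identical cohomology and the relevant spectral sequence degenerates.

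Next I would build the zigzag. Let \(\Omega_c = \ker\dif{}\conj \subseteq \Omega^\bullet(M)\). Since \(\dif{}\conj\) is a derivation, \(\Omega_c\) is a subalgebra, and because \(\dif{}\) and \(\dif{}\conj\) anticommute it is preserved by \(\dif{}\); thus \((\Omega_c, \dif{})\) is a sub-CDGA, with inclusion \(\iota\colon(\Omega_c,\dif{})\hookrightarrow(\Omega^\bullet, \dif{})\). Likewise \(\im\dif{}\conj\) is a \(\dif{}\)-stable ideal inside \(\Omega_c\), so the quotient \(\Omega_c/\im\dif{}\conj\) is a CDGA and the projection \(\pi\colon(\Omega_c,\dif{})\to(\Omega_c/\im\dif{}\conj, \dif{})\) is an algebra homomorphism. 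The key observation is that the induced differential on the quotient vanishes: for \(\alpha\in\ker\dif{}\conj\) the form \(\dif{}\alpha\) is \(\dif{}\)-exact and \(\dif{}\conj\)-closed, hence \(\dif{}\dif{}\conj\)-exact by the lemma, so \(\dif{}\alpha\in\im\dif{}\conj\). Thus \(\pi\) in fact lands in \((\Omega_c/\im\dif{}\conj, 0)\), a CDGA with zero differential.

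Finally I would check that both \(\iota\) and \(\pi\) are quasi-isomorphisms, each reducing to a single application of the \(\dif\dif{}\conj\)-lemma. For \(\iota\): surjectivity on cohomology holds because every class has a harmonic representative, which lies in \(\ker\dif{}\conj\); injectivity holds because a form in \(\ker\dif{}\conj\) that is \(\dif{}\)-exact is \(\dif{}\dif{}\conj\)-exact, hence a \(\dif{}\)-coboundary within \(\Omega_c\). For \(\pi\): any \(\alpha\in\ker\dif{}\conj\) can be corrected by a \(\dif{}\conj\)-exact term to a \(\dif{}\)- and \(\dif{}\conj\)-closed representative (surjectivity), while a \(\dif{}\)- and \(\dif{}\conj\)-closed form that is \(\dif{}\conj\)-exact is \(\dif{}\dif{}\conj\)-exact, hence a \(\dif{}\)-coboundary in \(\Omega_c\) (injectivity). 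Composing, the zigzag \((\Omega^\bullet,\dif{})\xleftarrow{\iota}(\Omega_c,\dif{})\xrightarrow{\pi}(\Omega_c/\im\dif{}\conj, 0)\) is a chain of algebra quasi-isomorphisms exhibiting the de~Rham CDGA as equivalent to one with vanishing differential (necessarily \(\mathrm H^\bullet(M)\)); by the definition above, \(M\) is formal.
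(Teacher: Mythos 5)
Your proposal is correct and follows essentially the same route as the paper: the zigzag \((\Omega,\dif{})\leftarrow(\ker\dif{}\conj,\dif{})\to(H_{\dif{}\conj},0)\) with both arrows quasi-isomorphisms by the \(\dif\dif{}\conj\)-lemma, which in turn rests on the Kähler identities forcing the Laplacians to agree (the paper's ``\(\mathcal N=2\) enhancement''). You simply spell out the quasi-isomorphism checks that the paper declares ``should be clear.''
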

\begin{proof}
With the technology we have set up, the proof is almost trivial. The de~Rham complex of a K\"ahler manifold exhibits \(\mathcal N=2\) supersymmetry, either with respect to supercharges \(\partial\) and~\(\bar\partial\) or with respect to \(\dif{}\) and \(\dif{}\conj \doteq  i \left( \partial - \bar\partial \right)\). The latter choice corresponds to 
\( t = [1:1] \)
in our \(\CP1\) family. From the standpoint of complex coefficients (and therefore physics) there is no difference; however, \cite{DGMS} choose \(\dif{}\) and~\(\dif{}\conj\) because they are real operators, and so act on the real de~Rham complex. 

There is no difference between \(\dif{}\)-cohomology and \(\dif{}\conj\)-cohomology; both count short representations of \(\mathcal N=2\) supersymmetry. However, passing to \(\dif{}\conj\)-cohomology furnishes an \emph{equivalence} of CDGAs with respect to the differential~\(\dif{}\):
\begin{equation}
(\Omega, \dif{}) \leftarrow (\ker \dif{}\conj, \dif{}) \to (H_{\dif{}\conj}, \dif{}).
\end{equation}
It should be clear that both arrows are quasi-isomorphisms, and that the differential induced on \(\dif{}\conj\)-cohomology by~\(\dif{}\) is identically zero. The result follows immediately.
\end{proof}

Formality thus follows simply whenever \emph{more than one} supercharge can be used to identify the zero-energy spectrum of the \emph{same} Hamiltonian. This is precisely the context of $\mathcal N\ge 2$ supersymmetry. However, the reader should keep in mind that the absence of enhanced supersymmetry does not necessarily imply a lack of formality.

One of the initial motivations that led to this study was as follows: Not every compact manifold is formal. A large class of examples are nilmanifolds, which are obtained as quotients of nilpotent simply connected Lie groups by discrete co-compact subgroups. 
For instance, we can take the three-dimensional Heisenberg group consisting of matrices 
\begin{equation}
N = \Set{ \begin{pmatrix} 1 & x & z \\ 0 & 1 & y \\ 0 & 0 & 1 \end{pmatrix}\colon x,y,z \in \mathbb R },
\end{equation}
which is homeomorphic to~\(\mathbb R^3\), and the subgroup consisting of elements for which \(x,y,z\in\mathbb Z\). The quotient is a compact three-manifold that is not formal.
One can construct simply connected manifolds with the same properties.

Since the ``vanilla'' flavor of supersymmetric quantum mechanics
does not require any additional structure, K\"ahler or otherwise, we were led to ask: does physics look different on a non-formal manifold? Do the Massey products, interpreted as operations relating sets of vacuum states, have a physical significance or meaning? In light of the discussion above, the reader should immediately see the proper response to this question. Massey products arise because of entanglement between the basis of energy eigenstates (or supersymmetry representations) and the basis that is most natural with respect to the wedge product. However, as we remarked in the introduction, the wedge product is not an obviously meaningful operation in supersymmetric quantum mechanics, and so the question does not make sense without additional structure. 

It is then natural to look for other examples of physical systems for which certain sets of \emph{operators} admit a description in terms of the cohomology of a supercharge, and therefore higher-order Massey products between (for instance) $Q$-cohomology classes of local operators, or supersymmetric defects with respect to the product defined by fusion~\cite{Brunner-Roggenkamp-Rossi}, might occur.
The results of~\cite{DGMS} translate directly to a particular class of physical examples: \(\mathcal N=2\) supersymmetric sigma models in two dimensions, which require K\"ahler structure on the target space as a necessary and sufficient condition for supersymmetry. These sigma models admit two different well-known topological twists (the $A$- and $B$-twists), after which the TQFT Hilbert spaces are identified with the cohomology of a geometric CDGA associated to the target space. The statement that this CDGA must necessarily be formal (for the $B$-model, this generalization of~\cite{DGMS} was proven by Zhou~\cite{Zhou}) suggests that whatever physical phenomena higher cohomology operations may represent cannot occur between local operators in a two-dimensional supersymmetric sigma model---at least for the $A$-twist in the large volume limit where the product agrees with the wedge product. The situation is more subtle for the $A$-model at finite volume. 

Understanding the correct notion of formality for quantum field theories is more subtle than for CDGAs or $0+1$-dimensional systems. The reason is that, while spaces of BPS operators, states, or defects form algebras on very general grounds~\cite{Harvey-Moore,LVW}, and moreover their algebraic structure comes from natural and meaningful physical operations (such as the structure of Fock space in perturbative theories, the OPE, or defect fusion), these operations are only guaranteed to be nonsingular for BPS objects, and are highly singular in general. Formality is not a property of the cohomology ring, but rather a property of the differential algebra from which it is derived; as such, understanding the physics (and the algebraic structure) of the non-BPS part of the theory is critical. Nonetheless, in quantum field theories that admit topological twists, one can look at the portion of the supersymmetry algebra that includes the scalar supercharge, and sometimes prove analogues of the \(\dif\dif{}\conj\) lemma. This can be done whenever the \(\mathcal N=2\) algebra of quantum mechanics is a subalgebra of the higher-dimensional twisted supersymmetry algebra. We will return to this question in Section~\ref{twists}.

It is also interesting to notice that the above discussion suggests a way to define ``homotopy groups'' with complex coefficients, for \(\mathcal N=2\) supersymmetric quantum mechanics, or more generally for twists of theories for which a $\dif\dif{}\conj$-lemma can be proven. These homotopy groups are really complex vector spaces, just like the $\mathbb C$-de~Rham homotopy groups we discussed above. To calculate them, one should take the graded algebra of $Q$-cohomology classes or BPS states, view it as a CDGA with zero differential, and construct its minimal model as was described above. The graded components of the minimal-model CDGA are then the $\mathbb C$-de~Rham homotopy groups; when the theory one started with is the \(\mathcal N=2\) quantum mechanics of a particle on a K\"ahler manifold, this procedure recovers the homotopy groups of the target space, tensored with~$\mathbb C$. For more abstract examples of supersymmetric quantum mechanics, the meaning of these new invariants is less clear.

%%%%%%%%%%%%%%%%%%%%%

\section{Physical origins of Lefschetz operators}
\label{fofks}
In this section, we quickly review work of Figueroa-O'Farrill, K\"ohl, and Spence~\cite{FOFKS}, who showed (pursuing an idea due to Witten) that the action of Lefschetz operators in the de~Rham complex of K\"ahler and hyper-K\"ahler manifolds can be understood in terms of dimensional reduction of the higher-dimensional supersymmetric sigma models that can be defined on these spaces. Our goal in doing this is to emphasize the utility of studying these theories via dimensional reduction to supersymmetric quantum mechanics, and also to point out that their argument allows analogues of the Lefschetz action to be defined in theories that are not necessarily sigma models. These Lefschetz-type symmetries are precisely the $R$-symmetries of the dimensionally reduced theories.
We detail the consequences of these Lefschetz-type symmetries, comparing them with the consequences of two-dimensional superconformal $R$-symmetries.\footnote{This is motivated by the fact that two-dimensional sigma models on Calabi-Yau and hyper-K\"ahler manifolds are automatically superconformal.} 
Further, we comment on some intriguing and (we believe) unexplained numerical coincidences.

The central idea is simple to state. Suppose that a \(d\)-dimensional, minimally supersymmetric sigma model can be defined on a target manifold~\(X\). When \(d=4\), it is necessary and sufficient for \(X\) to be K\"ahler; when \(d=6\), \(X\) must be hyper-K\"ahler. Again, this can be seen just by counting the supercharges that act in the de~Rham complex.

Ignoring questions of signature, the Lorentz group of the theory will be~$\SO(d)$. In order to dimensionally reduce to a quantum mechanics problem, we must fix a splitting of the worldsheet coordinates into one time and $d-1$ spatial directions. This breaks the Lorentz group to $\SO(d-1)$, acting on the spatial directions; this symmetry should survive as a flavor symmetry in the dimensionally reduced theory. 

For the four-dimensional sigma model, we therefore expect the symmetry algebra $\so(3) \cong \su(2)$ to act; for the six-dimensional sigma model, the relevant algebra will be~$\so(5)$. The result of~\cite{FOFKS} is that, in the context of K\"ahler and hyper-K\"ahler sigma models, these group actions agree precisely with the Lefschetz actions we reviewed above.

However, when arguing that these symmetry algebras must act on quantum mechanics after dimensional reduction, we did not actually use the fact that the theories in question are sigma models. The argument is quite general, and applies to the supersymmetric quantum mechanics obtained after dimensional reduction of \emph{any} theory. As such, if we are interested in seeing how close such a supersymmetric quantum mechanics problem is to the algebraic model of an honest space, we can use the existence of a Lefschetz action without making any further assumptions. 

\begin{figure*}[t]
\centerline{
\hspace{-5em}
\xymatrix{
&\makebox[\width][c]{\begin{tabular}{@{}c@{}}
\(\SO(5)\) \\
hyper-K\"ahler; 6d red.
\end{tabular}} 
\ar[dr] \ar[dl] & \\
\makebox[\width][c]{\begin{tabular}{@{}c@{}}
$\SO(4) \cong \SU(2)\times\SU(2) $ \\ \(\mathcal N=(4,4)\); 5d red. 
\end{tabular}}
\ar[d] \ar[drr] & & 
\makebox[\width][c]{\begin{tabular}{@{}c@{}}
$\SU(2) \times \operatorname U(1)$ \\ K\"ahler
\end{tabular}} \ar[dll] \ar[d] \\
\makebox[\width][c]{\begin{tabular}{@{}c@{}}
$\SO(3) \cong \SU(2) $\\4d red.
\end{tabular}}
\ar[dr] & & 
\makebox[\width][c]{\begin{tabular}{@{}c@{}}
$\operatorname U(1) \times\operatorname U(1)$ \\ \(\mathcal N=(2,2)\)
\end{tabular}} \ar[dl] \\
&
\makebox[\width][c]{\begin{tabular}{@{}c@{}}
$\SO(2) =\operatorname U(1)$ \\ Riemannian; 3d red.
\end{tabular}} & \\
}}
\caption[A subgroup diagram for $R$-symmetries of various theories]{The hierarchy of group actions that occur on twists of 2d sigma models, on dimensional reductions of higher-dimensional field theories, and on the cohomology of target spaces.}
\label{hierarchy}
\end{figure*}
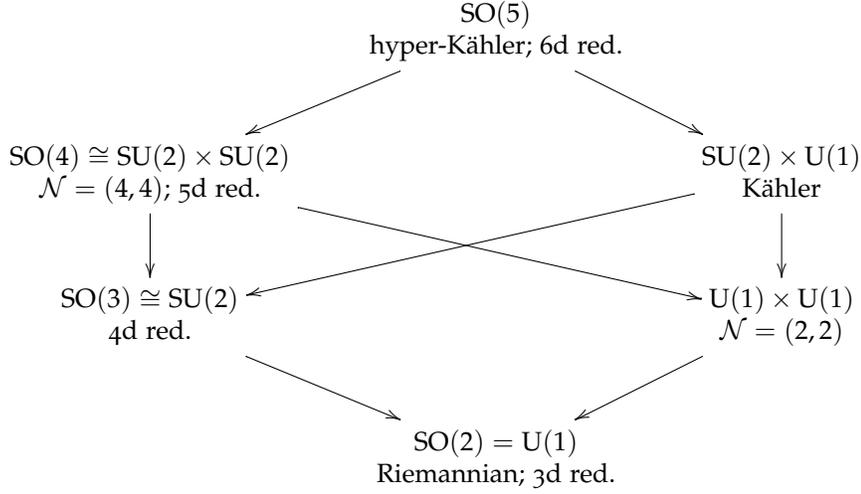

To be concrete, consider the \(\mathcal N=1\) supersymmetry algebra in four dimensions:
\begin{equation}
\{ Q_\alpha, \bar Q_{\dot\beta}\} =2 i\sigma^\mu_{\alpha\dot\beta}P_\mu
\label{4dN=1}
\end{equation}

Following \cite[Table 1]{FOFKS}, make the following identification:
\begin{align*}
(\partial,\bar\partial^\dagger)&\mapsto(Q_1,Q_2)&
(\partial^\dagger,\bar\partial)&\mapsto(\bar Q_{\dot1},\bar Q_{\dot2}).
\end{align*}
Note that there is a \(\CP1\) family of choices to be made here. We are picking a point out of the projectivization of the Weyl spinor space \(\mathbb C^2\). This is the same as picking a direction out of three-dimensional Euclidean space.

It is then simple to rewrite the algebra~\protect\eqref{4dN=1} as follows:
\begin{gather}
(\Delta_\partial,\Delta_{\bar\partial})=(2(P_0+P_3),2(P_0-P_3)), \nonumber \\
\{\partial,\bar\partial\}=\{Q_1,\bar Q_{\dot2}\}=P_1+ iP_2, \\
\{ \partial, \bar\partial^\dagger \} = \{ Q_1, Q_2 \} = 0. \nonumber
\label{4dInformal}
\end{gather}
Upon dimensional reduction to $0+1$ dimensions, we set the momentum operators \(P_i=0\). After doing so, this becomes the algebra of \(\mathcal N=2\) quantum mechanics, which we have written in the notation appropriate to K\"ahler manifolds.

If we do \emph{not} dimensionally reduce to quantum mechanics, this algebra is not an \(\mathcal N=2\) algebra. However, using Lorentz invariance, we may without loss of generality take the momentum to lie in the $P_3$ direction. We then obtain the \(\mathcal N=1^2\) algebra, where the Laplacians are as in~\protect\eqref{4dInformal}.

Recall from our discussion of the \(\mathcal N=1^2\) algebra above that the representations that spoil the \(\dif\dif{}\conj\)-lemma are the two-dimensional representations that are short with respect to one supercharge and long with respect to the other. By inspection of~\protect\eqref{4dInformal}, these are states for which $P_0 = \pm P_3$: that is to say, massless states in the four-dimensional theory.

Returning to quantum mechanics, the point is that we can naturally identify an action of $\su(2)\cong\so(3) $ that is compatible with the \(\mathcal N=2\) supersymmetry algebra we identified above. We define the analogues of the Lefschetz operators as follows:
\begin{align}
D &=2J_3 & (\Lambda^\dagger,\Lambda)=J_\pm&=J_1\pm iJ_2.
\label{abstractLefschetz}
\end{align}
These operators commute with the Hamiltonian  $P_0$ by the standard Poincar\'e algebra:
\[
[M_{\mu\nu}, P_\rho] = i \left( \eta_{\mu\rho} P_{\nu} - \eta_{\nu\rho} P_{\mu} \right) . 
\]
Again, although~\protect\eqref{abstractLefschetz} agrees with the usual Lefschetz action in the context of sigma models, nothing in this derivation relies on having a sigma-model description of the field theory! The analogues of the crucial K\"ahler identities~\protect\eqref{Kahler1} follow from the fact that the supercharges transform as a spinor under rotations.

%%%%%%%%%%%%%%%%%%%%%%%%%%%%%%%%%%%%%%%%%%%%%%%%%

Given the above, it should be clear that any supersymmetric quantum mechanics arising from dimensional reduction from a field theory in~$d$ dimensions has a Lefschetz-type action of~$\SO(d-1)$. In Figure~\ref{hierarchy}, we make a subgroup diagram showing these actions together with those that follow from $R$-symmetry in two-dimensional superconformal theories and those that follow from geometric realization as a sigma model. Each subgroup corresponds to a specific element on the partially ordered set of geometrical consequences.

This partially ordered set is not totally ordered. The consequences of \(\mathcal N=(2,2)\) superconformal symmetry and realizability as a K\"ahler sigma model are incomparable, for instance. Since $\SU(2)$ acts vertically on the Hodge diagram for K\"ahler manifolds, but the $\SU(2)\times\SU(2)$ arising from $\mathcal N=(4,4)$ $R$-symmetry acts diagonally --- one is not a subgroup of the other.

Furthermore, notice that the conjunction of these two requirements is strictly weaker than the consequences of realizability as a hyper-K\"ahler sigma model. For instance, \protect\cref{octagon-no-degeneracy} is consistent with \(\mathcal N=(2,2)\) superconformal symmetry and the K\"ahler \(\SU(2)\) Lefschetz action, but is inconsistent with \(\SO(5)\) Lefschetz action  (since the diagram has the wrong degeneracies in the interior of the octagon).

\begin{figure*}[t]
\centering
\begin{subfigure}{0.39\textwidth}
\centering
\begin{tikzpicture}[scale=0.85]
\drawdiamond3

\node at (1,0) {\(\bullet\)};
\node at (-1,0) {\(\bullet\)};
\node at (0,-1) {\(\bullet\)};
\node at (0,1) {\(\bullet\)};
\node at (1,-2) {\(\bullet\)};
\node at (1,2) {\(\bullet\)};
\node at (-1,-2) {\(\bullet\)};
\node at (-1,2) {\(\bullet\)};
\node at (-2,1) {\(\bullet\)};
\node at (2,1) {\(\bullet\)};
\node at (-2,-1) {\(\bullet\)};
\node at (2,-1) {\(\bullet\)};
\end{tikzpicture}
\caption{This pattern is consistent with \(\mathcal N=(4,4)\) superconformal algebra's \(\SU(2)\times\SU(2)\) action, but is incompatible with
4d Lefschetz action.}
\end{subfigure}
\qquad
\begin{subfigure}{0.39\textwidth}
\centering
\begin{tikzpicture}[scale=1.2]
\drawdiamond2
\node at (0.5,0.5) {\(\bullet\)}; 
\node at (-0.5,0.5) {\(\bullet\)};
\node at (0.5,-0.5) {\(\bullet\)};
\node at (-0.5,-0.5) {\(\bullet\)};
\node at (1.5,0.5) {\(\bullet\)};
\node at (1.5,-0.5) {\(\bullet\)};
\node at (-1.5,0.5) {\(\bullet\)};
\node at (-1.5,-0.5) {\(\bullet\)};
\node at (0.5,1.5) {\(\bullet\)};
\node at (0.5,-1.5) {\(\bullet\)};
\node at (-0.5,1.5) {\(\bullet\)};

\node at (-0.5,-1.5) {\(\bullet\)};
% invisible dots for alignment
\node at (0,-2) {\(\phantom{\bullet}\)};
\node at (0,2) {\(\phantom{\bullet}\)};
\end{tikzpicture}
\caption{This octagon is compatible with \(\mathcal N=4\) with 4d Lefschetz action, but is incompatible with 6d Lefschetz action.}\label{octagon-no-degeneracy}
\end{subfigure} 
\caption{Illustration of the differences between requirements tabulated in \protect\cref{hierarchy}.}
\end{figure*}
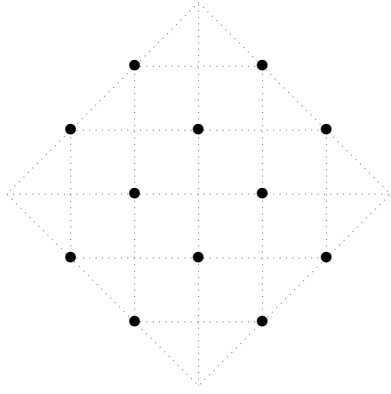
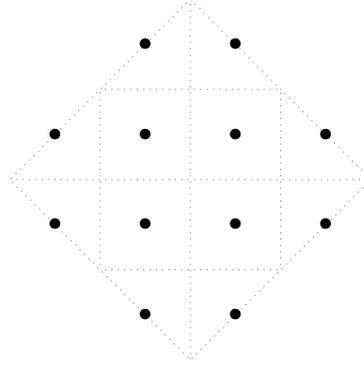

%%%%%%%%%%%%%%%%%%%%%%%%%%%%%%%%%%%%%%%%%%%%%%%

There is one interesting thing to note. Verbitsky constructed the $\SO(4,1)$ action for hyper-K\"ahler manifolds by using the three K\"ahler classes as generators. Since these three classes are independent, $b_2\ge 3$ for any hyper-K\"ahler manifold. However, if $b_2$ is large, there are many more ``candidates'' for defining Lefschetz operators, and one may wonder if a larger symmetry algebra can be defined in this case. 

In later work, Verbitsky~\cite{Verbitsky-enhanced, Verbitsky-enhanced-proof} showed that this is indeed the case. He constructed an action of~$\SO(4, b_2-2)$ on the cohomology of a hyper-K\"ahler manifold, and it is natural to wonder (indeed, the authors of~\cite{FOFKS} already do so) whether this can be understood from a supersymmetric perspective as well.

The only compact hyper-K\"ahler four-manifold is the K3 surface. For hyper-K\"ahler manifolds of (real) dimension eight, Guan~\cite{Guan} has shown that the second Betti number must be one of the following:
\[b_2\in\{3,4,5,6,7,8,23\},\]
although examples are only explicitly known for \(b_2=7,23\). We draw the Hodge diamonds of these examples~\cite{Salamon} in Figure~\ref{hypK-hodge}.

If one were to realize Verbitsky's \(\so(b_2+2)\) actions for these target spaces in terms of Lorentz symmetry, as has been done for the standard Lefschetz actions, this would correspond to the following set of dimensions for the corresponding sigma models:
\[d\in\{6,7,8,9,10,11,26\}.\]
The largest three---$10$, $11$, and $26$---happen to equal the spacetime dimensions of superstring theory, M-theory, and bosonic string theory. Moreover, the examples that have been constructed correspond exactly to $d=10$ and~$d=26$. We emphasize, however, that these should be thought of as worldsheet (not target-space) dimensions!

It is worth pointing out that compactification of superstring theory on a hyper-K\"ahler eight-manifold leaves a two-dimensional effective theory in the target space. In some sense, it is as if a rather bizarre duality exchanges the roles of target space and worldsheet. 

To the best of our knowledge, there is no straightforward explanation of these facts, since there is no such thing as the supersymmetric sigma model in dimension $d>6$. (This can be seen in many ways: from Berger's classification of exceptional holonomy, or by noting that the algebra of complex structures required for~\(\mathcal N=8\) would have to be that of the octonions.) In such a large spacetime dimension, the spinor representations in supersymmetry would be far bigger than the number of differential operators present on differential forms on a hyper-K\"ahler manifold. 

These considerations suggest a scenario in which a \(d\)-dimensional theory, which is not supersymmetric, somehow acquires supersymmetry upon dimensional reduction to six (or fewer) dimensions. However, we lack any concrete proposal or meaningful evidence for this speculative scenario.

\begin{figure*}[t!]
\centering
\newcommand\hksymmetricput[3]{
  \node at (#1,#2) {\(#3\)};
  \node at (#1,-#2) {\(#3\)};
  \node at (-#1,#2) {\(#3\)};
  \node at (-#1,-#2) {\(#3\)};
  \node at (#2,#1) {\(#3\)};
  \node at (#2,-#1) {\(#3\)};
  \node at (-#2,#1) {\(#3\)};
  \node at (-#2,-#1) {\(#3\)};
}

\newcommand\hksymmetricputDiag[2]{
  \node at (#1,#1) {\(#2\)};
  \node at (#1,-#1) {\(#2\)};
  \node at (-#1,#1) {\(#2\)};
  \node at (-#1,-#1) {\(#2\)};
}

\newcommand\hksymmetricputHV[2]{
  \node at (0,#1) {\(#2\)};
  \node at (0,-#1) {\(#2\)};
  \node at (#1,0) {\(#2\)};
  \node at (-#1,0) {\(#2\)};
}

\newcommand\drawbordernumbers{
  \drawdiamond2
  \hksymmetricputHV{2}{1}
  \hksymmetricputDiag{1}{1}
  \hksymmetricput{0.5}{1.5}{0}
}

\begin{tikzpicture}
\drawbordernumbers
\hksymmetricputHV15
\hksymmetricputDiag{0.5}4
\node at (0,0) {\(96\)};
\end{tikzpicture}
\qquad\qquad
\begin{tikzpicture}
\drawbordernumbers
\hksymmetricputHV1{21}
\hksymmetricputDiag{0.5}0
\node at (0,0) {\(232\)};
\end{tikzpicture}

\caption[Hodge diamonds of compact hyper-K\"ahler eight-manifolds]{The Hodge diamonds of known examples of compact hyper-K\"ahler eight-manifolds with $b_2 > 3$.} 
\label{hypK-hodge}
\end{figure*}

\section{Topological twists of quantum field theories}
\label{twists}

In this section, we take inspiration from the discussion of the $\dif\dif{}\conj$-lemma above and look at topological twists of field theories in higher dimensions. For some twists, the supersymmetry algebra after twisting contains the algebra of \(\mathcal N=2\) quantum mechanics as a subalgebra; for others, this is not true. % Moreover, there can be subtleties, which we address on a case-by-case basis.

% In the course of trying to understand the implications of the $\dif\dif{}\conj$-lemma for twists, we became aware of the work of Dijkgraaf and Moore~\cite{Dijkgraaf-Moore}, who looked at a class of topological theories defined by a similar property and named them ``balanced TQFTs'' or ``$\N_T = 2$ theories.''

\subsection{4d \(\mathcal N=2\)}

The relevant part of the four-dimensional extended supersymmetry algebra is
{\allowdisplaybreaks
\begin{gather*}
\{ Q_\alpha^A, \bar Q_{\dot \beta B} \} = 2 \sigma^\mu_{\alpha\dot\beta} P_\mu \delta^A_B, \\
\{ Q_\alpha^A, Q_\beta^B \} = \epsilon_{\alpha\beta} \epsilon^{AB} Z,
\end{gather*}}
The adjoint relationship is, once again, $(Q_\alpha^A)^\dagger = \bar Q_{\dot \alpha A}$. 
We will only consider the case without central charges, so that the scalar supercharges obtained after twisting are honestly nilpotent. In the case when~$Z$ is nonzero, the scalar supercharge will square to~$Z$; topological twists can still be understood in this setting, but we will not address this here. The reader is referred to~\cite{Labastida-Marino}.

The representations in which the supercharges sit are
\begin{align*}
Q_\alpha^A&\colon (\rep 2, \rep1; \rep 2) &
\bar Q_{\dot\beta}^B&\colon (\rep1, \rep 2; \rep 2).
\end{align*}
The first two numbers refer to the representation of the Lorentz group $\SO(4) \cong \SUl \times \SUr$, and the last number is a representation of $\SU(2)$ $R$-symmetry.
Up to equivalence, there is a unique way to construct a topological twist of this algebra. One chooses the homomorphism 
\[ \pr_1\colon \SUl \times\SUr \to \SU(2)_R, \]
and lets the Lorentz group act on $R$-symmetry indices via this identification rather than trivially. The resulting representations are
{\allowdisplaybreaks
\begin{align*}
\left(Q_1^1, \frac{1}{\sqrt2}(Q_1^2 + Q_2^1), Q_2^2 \right) &\colon ({\rep 3}, 1) \\
q \doteq \frac{1}{\sqrt{2}} (Q_1^2 - Q_2^1) &\colon  (1,1)  \\
\bar Q_{\dot\beta}^B &\colon (\rep 2,{\rep 2}).
\end{align*}
}
The quantum numbers indicated are for the new Lorentz group $\SUl' \times \SUr$. The scalar supercharge with respect to which we twist is~$q$; in order that $q$ be nilpotent, we must set the central charge to zero. It is simple to check that the ``Laplace operator'' corresponding to this supercharge is
\begin{align*}
\{ q, q^\dagger \} &= \frac{1}{2} \{ Q_1^2 - Q_2^1, \bar Q_{\dot 1 2} - \bar Q_{\dot 2 1} \} \\
&= \frac12 \left( 2 P_\mu \sigma^\mu_{1 \dot 1} + 2 P_\mu \sigma^\mu_{2 \dot 2} \right) \\
&= \tr (P_\mu \sigma^\mu_{\alpha\beta}) \\
&= 2 P_0.
\end{align*}
Therefore, the $q$-cohomology counts states on which the Hamiltonian acts by zero. This is what we should expect if the result of twisting is to be a topological theory. 

There is no other scalar supercharge in the theory, so adding any more supercharges to our subalgebra will give a result that is not Lorentz invariant. Of course, this is not a problem: one merely needs the existence of an \(\mathcal N=2\) subalgebra to show that the $\dif\dif{}\conj$-lemma holds, whereas a single Lorentz-invariant supercharge is enough to make the twist. It is instructive to consider the commutation relations with the supercharge
\[ \tilde q = \frac1{\sqrt2} \left( Q_1^2 + Q_2^1 \right). \]
The two supercharges commute with one another (even in the presence of central charges): 
\begin{align*}
\{ q, \tilde q \} &= \frac{1}{2} \{ Q_1^2 - Q_2^1, Q_1^2 + Q_2^1 \}  \\
&= \frac12 \left( \epsilon_{12}Z^{21} - \epsilon_{21}Z^{12} \right) \\
&= 0. 
\end{align*}
We have therefore identified a \(\mathcal N=1.5\) subalgebra. However, this fails to close into an \(\mathcal N=1^2\) algebra. It is easy to check that
\[
\{ q, \tilde q^\dagger \} = 2 P_3. 
\]
However, the Laplacian of $\tilde q$ nonetheless agrees with that of~$q$:
\[
\{ \tilde q, \tilde q^\dagger \} = 2 P_0.
\]
The reader may see a puzzle here---the Laplacians agree, so there is no way for a representation to have both zero and nonzero energy. It should follow that the one-dimensional representations are the only short representations. 

The resolution to this puzzle consists of remembering that (in Lorentzian signature) $P_0 \geq |P_3|$. Therefore, if a state satisfies $P_0=0$, the operator $P_3$, which is the staircase operator in this context, must act on it by zero as well. This is sufficient to show that the only permissible staircases have length two, just as is the case for the $\mathcal N=1^2$ algebra. 

Indeed, the algebra we have written is equivalent to the $\mathcal N=1^2$ algebra after a change of basis. Taking the supercharges to be
\begin{equation*}
\begin{split}
\partial=Q_1^2, \quad
\bar\partial=-Q_2^1, \quad  \\
\partial^\dagger=\bar Q_{\dot12}, \quad 
\bar\partial^\dagger=-\bar Q_{\dot21},
\end{split}
\end{equation*}
it follows immediately that
\begin{align*}
\Delta_\partial=\{\partial,\partial^\dagger\}&=2\sigma_{11}^\mu P_\mu=P_0+P_3\\
\Delta_{\bar\partial}=\{\bar\partial,\bar\partial^\dagger\}&=2\sigma_{2\dot2}^\mu P_\mu=P_0-P_3\\
\{\partial,\bar\partial\}&=\{\partial,\bar\partial^\dagger\}=0.
\end{align*}
This is therefore a typical \(\mathcal N=1^2\) system. The new states that appear in $\partial$- and $\bar\partial$-cohomology, as compared to $q=\partial+\bar\partial$-cohomology, satisfy \(P_3=\pm P_0\) respectively. Put differently, \emph{massless} states cause the failure of the $\dif\dif{}\conj$-lemma. 
In a topological twist of a massive four-dimensional $\mathcal N=2$ theory, we would expect it to hold.

\subsection{3d \(\mathcal N=4\)}

Inspecting the quantum-mechanical subalgebra we were considering above, it is easy to see that it becomes \(\mathcal N=2\) subalgebra upon setting $P_3=0$; in other words, after dimensional reduction to three dimensions. It makes sense that this should generate a new scalar supercharge: after the twist, the $\bar Q_{\dot\beta}^B$ are in a vector representation of~$\SO(4)'$, which will become a vector and a scalar upon dimensional reduction. The new scalar is precisely~$q^\dagger$; therefore, we obtain a complete Lorentz-invariant copy of the de~Rham algebra, sitting inside a non-Lorentz-invariant \(\mathcal N=2\) algebra. 

This shows that the $\dif\dif{}\conj$-lemma holds for one of the two possible twists of three-dimensional $\mathcal N=4$ theories. While there are two twists that are in general inequivalent~\cite{Blau-Thompson}, they are the same at the level of the supersymmetry algebra. Only the decomposition of supermultiplets differs. As such, the \(\mathcal N=2\) subalgebra we identified above exists in both possible twists.

\subsection{4d \(\mathcal N=4\)}
This is the dimensional reduction of minimal supersymmetry in ten dimensions. The $R$-symmetry is $\Spin(6)\cong\SU(4)$. The representations of the supercharges before twisting are
\begin{align}
Q_\alpha^A &\colon (\rep 2, \rep1; \rep 4) &\bar Q_{\dot\beta}^B&\colon (\rep1, \rep 2; \rep{\bar 4}).
\end{align}

\paragraph{Donaldson--Witten twist} This corresponds to choosing an \(\mathcal N=2\) subalgebra and twisting as above. The twisting homomorphism embeds $\SUl$ in the obvious way as a block diagonal inside $\SU(4)$. This leaves an~$\SU(2)\times \operatorname U(1)$ subgroup of the $R$-symmetry unbroken. The $R$-symmetry representations of the supercharges decompose as 
\begin{align}
(\rep 2,\rep1;\rep 4) &\to (\rep 3, \rep1; \rep1)_1 \oplus (\rep1,\rep1;\rep1)_1 \oplus (\rep 2,1;\rep 2)_{-1} \\
(\rep1,\rep 2;\rep{\bar 4}) &\to (\rep 2,\rep 2;\rep1)_{-1} \oplus (\rep1,\rep 2;\rep 2)_{1}
\end{align}
The representations indicated at right are for $\SUl' \times \SUr \times \left( \SU(2)_R\times\operatorname U(1)_R\right)$.
Taking $\SUl'$ to act on the first two $R$-symmetry indices, the scalar supercharge is
\[ q = \frac{1}{\sqrt 2} \left( Q_1^2 - Q_2^1 \right), \]
just as for the twist of the \(\mathcal N=2\) theory. All the calculations from above continue to be valid, so that the corresponding Laplacian is again proportional to~$P_0$.

However, it is now easy to find another choice of supercharge that defines an~\(\mathcal N=2\) subalgebra. For instance, one can take
\[ \tilde q = \frac{1}{\sqrt 2} \left( Q_1^4 - Q_2^3 \right) . \]
It is simple to check the commutation relations between $q$, $\tilde q$, and their adjoints. Once again, this subalgebra is not Lorentz-invariant, but this is not important in the context of proving the $\dif\dif{}\conj$-lemma. 

\paragraph{Vafa--Witten twist}
This corresponds to embedding \(\SUl\) diagonally inside the obvious $\SO(3)\times\SO(3)$ subgroup of~\(\SO(6)\cong\SU(4)\). The unbroken $R$-symmetry is \(\SU(2)_R\), rotating the two diagonal blocks into one another. (If we write the \(\SU(4)\) fundamental index as a pair of indices valued in~$\{1,2\}$, \(\SUl\) acts on one of these indices and the unbroken \(\SU(2)_R\) on the other.) The representations of the supercharges decompose as follows:
\begin{align}
(\rep2,\rep1; \rep4) &\to (\rep3,\rep1;\rep 2) \oplus (\rep1,\rep1;\rep 2), \\
(\rep1,\rep2; \rep{\bar 4}) &\to (\rep2, \rep 2;\rep 2).
\end{align}
The representations indicated at right are for $\SUl' \times \SUr \times \SU(2)_R$. The scalar supercharges can be written explicitly as 
\begin{align}
q_\uparrow &= \frac1{\sqrt2} \left( Q_1^2 - Q_2^1 \right) &
q_\downarrow &= \frac1{\sqrt2} \left( Q_1^4 - Q_2^3 \right).
\end{align} 
(The subscripts refer to the representation of~$\SU(2)_R$.) 

These supercharges are the same as the $q$ and~$\tilde q$ identified above, and so satisfy the same commutation relations, generating an~\(\mathcal N=2\) subalgebra.

\paragraph{Marcus--Kapustin--Witten twist}
This corresponds to the obvious block-diagonal homomorphism 
\[ \SUl \times \SUr \to \SU(4)_R. \]
There is an unbroken $\operatorname U(1)_R$ symmetry, commuting with this embedding, which rotates the overall phases of the two blocks in opposite directions. 
Under this embedding the $\rep 4$ of the $R$-symmetry group transforms as $(\rep 2, 1)_1 \oplus (1,\rep 2)_{-1}$ with respect to the twisted Lorentz group and the unbroken $\operatorname U(1)_R$.  The transformation of the $\rep{\bar 4}$ representation is the same, but with $\operatorname U(1)$ charges reversed.

It follows that the supersymmetries transform after the twist as 
\begin{equation}
\begin{aligned}
(\rep 2, \rep1; \rep 4) &\to (\rep1,\rep1)_1 \oplus (\rep 3,\rep1)_1 \oplus (\rep 2,\rep 2)_{-1} \\
(\rep1, \rep 2; \rep{\bar 4}) &\to (\rep1,\rep1)_1 \oplus (\rep1,\rep 3)_1 \oplus (\rep 2, \rep 2)_{-1}.
\end{aligned}
\end{equation}
The representations indicated at right are for $\SUl' \times \SUr' \times \operatorname U(1)_R$. 
(Our convention for the sign of the $\operatorname U(1)_R$-charge is the opposite of that of~\cite{Kapustin-Witten}, so that the scalar supercharges carry charge $+1$. This is in keeping with our use of cohomological grading conventions, for which the differential has positive degree, throughout the paper.)

Explicitly, the scalar supercharges can be written as 
\begin{equation}
q\Left = \frac{1}{\sqrt{2}} \left( Q_1^2 - Q_2^1 \right), \quad
q\Right = \frac{1}{\sqrt{2}} \left( \bar Q_{\dot 1 4} - \bar Q_{\dot 2 3} \right).
\end{equation}
The subscripts indicate the chirality of the spinor supercharge from which the scalar is derived. It is clear that these supercharges and their adjoints form the same subalgebra as for the Vafa-Witten twist, since $q\Left = q_\uparrow$ and $q\Right = q_\downarrow^\dagger$. 

In fact, the VW and MKW twists are isomorphic for the sphere \(S^3\) Hilbert space (equivalently, for local operators). These are just \(\tr\phi^n\) in both cases, where \(\phi\) is the adjoint sgaugino. The two twists are, however, very different for non-spheres, where the TQFT is \emph{not} a subsector of the untwisted SQFT.

\subsection{2d superconformal algebras}

As we already mentioned above, the $\mathcal N=(2,2)$ superconformal algebra has a global subalgebra (for instance, in the Ramond sector) which is precisely equal to two copies of the de~Rham algebra, one in the left-moving and one in the right-moving sector. As such, the algebra relevant to twists of these theories is always the $\mathcal N=1^2$ algebra.

In this case, one can consider deformation problems of the sort we discussed in general, where some mixture of these four supercharges defines a nilpotent operator. This leads to the $A$- and $B$-twists, as well as to the elliptic genera of $\mathcal N=(2,2)$ theories at the special points where BPS degeneracies are enhanced. Further remarks on these ideas in the context of $\mathcal N=(2,2)$ theories will be made in~\cite{GNSSS-Forthcoming}.

All twists of $\mathcal N=(4,4)$ theories come from $\mathcal N=(2,2)$ subalgebras: there is a \(\CP1\) family of twists~\cite{Berkovits-Vafa}, and this corresponds to choosing a U(1) inside SU(2) R-symmetry, which amounts to choosing \(\mathcal N=2\) subalgebra of \(\mathcal N=4\).

\section{Superconformal symmetry and reconstructing target spaces}
In this section, we briefly note a couple of the results of Sullivan and others that motivated us to begin reading about rational homotopy theory. 

The theorems concern when an abstract rational homotopy type (presented, for instance, by a minimal-model CDGA) is, in fact, the rational homotopy type of an honest manifold. As such, one should think of them as stating sufficient conditions for an abstract algebraic model to be ``geometrical.'' This is very close to the question we raised in the introduction: given (for instance) a two-dimensional $\mathcal N=(2,2)$ superconformal theory, which one may think of as a string theory background, what conditions are sufficient to ensure that it can be described as the sigma model on some geometrical target space?

In the context of Theorem~\ref{Sullivan-Barge}, one enumerates a list of all structures that can be defined on a CDGA originating from geometry, and carefully details the properties that these structures have. This list of necessary conditions turns out to also be sufficient.

\begin{theorem}[Sullivan--Barge; see~\cite{Sullivan-Infinitesimal}, as well as~\cite{Felix}, Theorem 3.2]
\label{Sullivan-Barge}
Let \(M\) be a simply connected Sullivan minimal model over~$\mathbb Q$, whose cohomology satisfies Poincar\'e duality with respect to a top form in dimension~$d$. Choose an element $p \in \oplus_i\mathrm H^{4i}(M)$, representing the total rational Pontryagin class. 

 If \(4\nmid d\), there is a compact simply connected manifold whose rational homotopy type has~$M$ as its minimal model and~$p$ as its total Pontryagin class.

If \(d=4k\), the statement remains true if the manifold is permitted to have one singular point. The singular point may be removed if, and only if:
\begin{itemize}
\item the intersection form is equivalent over~$\mathbb Q$ to a quadratic form $\sum \pm x_i^2$; and
\item either the signature is zero, or a choice of fundamental class can be made such that the Pontryagin numbers are integers satisfying certain congruences. 
\end{itemize}
\end{theorem}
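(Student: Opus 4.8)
The plan is to treat this as a realization problem in surgery theory: the minimal model only enters in order to produce a homotopy type, and the real content is deciding when a simply connected rational Poincar\'e duality type is carried by a closed smooth manifold. First I would apply Sullivan's spatial realization to \(M\) to obtain a simply connected rational space, and then, using that \(\mathrm{H}^\bullet(M)\) is finite-dimensional and satisfies Poincar\'e duality, promote it to a finite simply connected CW complex \(X\) of formal dimension \(d\) that is a Poincar\'e duality space in the geometric sense (with a genuine fundamental class) and has the prescribed rational homotopy type. The top group \(\mathrm{H}^d(M) = \mathbb{Q}\) is one-dimensional, so the fundamental class---and hence all the characteristic numbers---is pinned down only up to an overall rational scale, a freedom that will reappear in the \(4 \mid d\) analysis.

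Next I would address the tangential data. The Spivak normal fibration of \(X\) is canonical, and smoothing requires lifting its classifying map from \(BG\) to \(BO\). Since \(BG\) is rationally trivial and \(\mathrm{H}^\bullet(BO;\mathbb{Q}) = \mathbb{Q}[p_1,p_2,\dots]\) with \(|p_i| = 4i\), a rational reduction always exists and the choice of such a reduction is exactly the choice of the rational Pontryagin class \(p \in \oplus_i \mathrm{H}^{4i}(M)\). The substantive requirement is that this rational reduction descend from an honest integral vector bundle; this is what forces the Pontryagin numbers to be integers obeying the standard integrality constraints satisfied by smooth closed manifolds (index-theoretic conditions on the \(\widehat A\)- and \(L\)-genera), and it is the origin of the congruence conditions in the last bullet.

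With a degree-one normal map to \(X\) assembled, I would run surgery below the middle dimension, which is unobstructed because \(X\) is simply connected, leaving a single obstruction in \(L_d(\mathbb{Z})\). For \(d\) odd this group vanishes, and for \(d \equiv 2 \pmod 4\) it is the finite group \(\mathbb{Z}/2\); in either case the obstruction is torsion, evaporates rationally, and accounts for the ``finite ambiguities,'' which settles the case \(4 \nmid d\). When \(d = 4k\) one has \(L_{4k}(\mathbb{Z}) \cong \mathbb{Z}\) detected by the signature, and the obstruction is the discrepancy between the signature of the intersection form on \(\mathrm{H}^{2k}(X;\mathbb{Q})\) and the value \(\langle \mathcal{L}_k(p), [X]\rangle\) predicted by the Hirzebruch signature theorem. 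Vanishing of this obstruction is the content of the ``signature is zero, or suitable congruences hold'' clause, while the condition that the form be equivalent over \(\mathbb{Q}\) to \(\sum \pm x_i^2\) is precisely the requirement that the abstract Poincar\'e pairing be realizable as the middle intersection form of a manifold---this I would extract from Wall's classification of highly connected manifolds, in which only split forms occur rationally.

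Finally, even when the \(L_{4k}\) obstruction is nonzero, surgery can always be completed away from one point, realizing \(X\) by a manifold with a single cone-like singularity on a homotopy sphere; this yields the weaker ``one singular point'' statement, and the singularity is removable exactly when the obstruction vanishes, i.e.\ exactly under the two listed conditions. The main obstacle, and where Barge's refinement does the real work, is the \(d = 4k\) bookkeeping: one must show that the algebraic data carried by the minimal model (the rational intersection form and the admissible scalings of the fundamental class) match the geometric middle-dimensional surgery obstruction on the nose, and that integrality of the bundle reduction is neither weaker nor stronger than the stated congruences on Pontryagin numbers. Verifying that these two very differently flavored computations agree is the crux of the argument.
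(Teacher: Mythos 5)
The paper does not prove this theorem: it is quoted verbatim from Sullivan and Barge (with \cite{Felix} as a secondary reference) and used as a black box to motivate the physics discussion, so there is no in-paper argument to compare yours against. On its own terms, your surgery-theoretic outline is the standard route to Sullivan--Barge and the overall architecture---spatial realization, Spivak fibration, rational triviality of $BG$ so that the lift to $BO\otimes\mathbb Q$ is exactly the choice of $p$, unobstructed surgery below the middle dimension, and the $L_d(\mathbb Z)$ trichotomy---is correct. Three points deserve tightening. First, the promotion of the rational space to an honest finite Poincar\'e complex with a genuine (integral) fundamental class is not automatic and is where a real portion of Barge's work lives; the Spivak fibration only exists once you have that integral Poincar\'e complex, so this step cannot be waved through. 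Second, the condition that the intersection form be $\mathbb Q$-equivalent to $\sum\pm x_i^2$ is not a consequence of Wall's classification of highly connected manifolds; it is the elementary fact that the intersection form of a closed oriented $4k$-manifold is unimodular over $\mathbb Z$, and every unimodular integral form is rationally equivalent to $\sum\pm x_i^2$ (trivial Hasse--Witt invariants, square discriminant)---that is why the condition is necessary, and realizing it is part of showing sufficiency. Third, you correctly note that the fundamental class is only pinned down up to rational scale, but you should make explicit that the ``certain congruences'' on Pontryagin numbers arise from demanding a common scaling under which the Hirzebruch signature identity, the integrality of the normal bundle (descent from $BO\otimes\mathbb Q$ to $BO$), and the index-theoretic integrality constraints are all satisfied simultaneously; conflating this with the vanishing of the single $L_{4k}(\mathbb Z)$ obstruction undersells the bookkeeping. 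None of these is a fatal flaw, but each is a place where your sketch asserts rather than proves.
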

\begin{theorem}[\cite{Sullivan-Infinitesimal}, Theorem~12.5]
The diffeomorphism type of a simply connected K\"ahler manifold is determined by
its integral cohomology ring and total rational Pontryagin class, up to a finite list of possibilities.
\end{theorem}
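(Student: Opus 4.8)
The plan is to combine the formality of compact K\"ahler manifolds (the DGMS theorem proved above) with the Sullivan--Barge finiteness result (\cref{Sullivan-Barge}), reducing the statement to a bookkeeping check that every piece of data the latter requires is recoverable from the integral cohomology ring and the rational Pontryagin class.

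First I would use formality to pin down the rational homotopy type. Since a compact K\"ahler manifold $X$ is formal, its rational de~Rham CDGA is equivalent to $(\mathrm H^\bullet(X;\mathbb Q),0)$; because the minimal model is an invariant of the equivalence class of a CDGA, the minimal model of $X$ over $\mathbb Q$ is determined by the graded-commutative algebra $\mathrm H^\bullet(X;\mathbb Q)$ alone. As $\mathrm H^\bullet(X;\mathbb Q) = \mathrm H^\bullet(X;\mathbb Z)\otimes\mathbb Q$, the rational homotopy type is thus a function of the integral cohomology ring. This is the one and only place where the K\"ahler hypothesis enters.

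Next I would invoke the finiteness half of the Sullivan--Barge theory: among simply connected compact smooth manifolds, the diffeomorphism type is determined up to a finite list by the rational homotopy type together with the total rational Pontryagin class $p$ and the auxiliary realization data appearing in \cref{Sullivan-Barge} (the rational intersection form, its signature, and the Pontryagin numbers). Existence is not at issue here, since $X$ itself realizes the data; the content I need is only that finitely many diffeomorphism types can share it. The remaining task is to verify that each auxiliary datum is computable from $(\mathrm H^\bullet(X;\mathbb Z), p)$: the rational intersection form is the cup-product pairing on middle cohomology evaluated against the fundamental class, hence intrinsic to the cohomology ring; its signature follows; and the Pontryagin numbers are cup products of the components of $p$ paired with the fundamental class, again determined by $(\mathrm H^\bullet, p)$. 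In the $d=4k$ case the potential obstruction to removing the singular point of \cref{Sullivan-Barge} is automatically absent, precisely because $X$ is a genuine smooth manifold realizing the data.

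The main obstacle is conceptual rather than computational: it lies in correctly extracting the uniqueness-up-to-finiteness statement from the (primarily existence-oriented) formulation of \cref{Sullivan-Barge}, and in checking scrupulously that the ``finite ambiguity'' depends on nothing beyond the listed invariants. Once formality and the Sullivan--Barge finiteness theorem are granted, the argument is a matter of tracking which invariants determine which, and no genuinely new geometry enters.
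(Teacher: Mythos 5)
The paper offers no proof of this statement at all: it is quoted verbatim from Sullivan (Theorem~12.5 of \cite{Sullivan-Infinitesimal}) as an imported result, so there is no argument of the paper's to compare yours against. That said, your outline does follow the route Sullivan himself takes: formality of compact K\"ahler manifolds is exactly the step that collapses the rational homotopy type down to a formal consequence of the cohomology ring, and it is the only place the K\"ahler hypothesis is used. That part of your sketch is correct.

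The genuine gap is in your second step. The finiteness you need is \emph{not} a ``finiteness half'' of \cref{Sullivan-Barge}: that theorem is purely a realization statement (which rational homotopy types plus Pontryagin data are represented by manifolds), and no amount of bookkeeping with intersection forms and Pontryagin numbers extracts a uniqueness-up-to-finiteness conclusion from it. The load-bearing input is a separate theorem of Sullivan's (the finiteness theorem, Theorem~13.1 in \cite{Sullivan-Infinitesimal}): for simply connected closed smooth manifolds of dimension at least five, the rational homotopy type together with the integral cohomology lattice and the rational Pontryagin classes determines the diffeomorphism type up to finitely many possibilities. This is proved by surgery theory and the finiteness of the relevant normal-invariant and automorphism data; it is of an entirely different character from the realization theorem, and your proposal simply assumes it. You flag the issue yourself as ``conceptual rather than computational,'' but it is in fact the entire mathematical content of the non-K\"ahler part of the argument. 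A secondary caveat: the finiteness theorem requires real dimension at least five, and in real dimension four the conclusion is now known to fail for general smooth manifolds (infinitely many diffeomorphism types sharing integral cohomology ring and $p_1$), so any careful write-up must either restrict the dimension or treat complex surfaces separately.
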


One key feature in the above theorems is the choice of coefficients: integers or rational numbers, rather than complex numbers, which are most natural from the standpoint of physics. We were thus led to ask whether one could recover a natural integer lattice in the physical Hilbert space of a sigma model, corresponding to the image of the integral cohomology in cohomology with complex coefficients. In fact, another motivation led us to think about this as well. In considering the geometrical symmetries of the cohomology of a K\"ahler manifold, the Hodge symmetry map is \emph{antilinear}: it identifies $\mathrm H^{p,q}$ with $\bar{\mathrm H}^{q,p}$.So far we have only discussed this map at the level of Hodge numbers. We asked whether any physical symmetry of the theory could recover this complex-antilinear operation.

In the context of physics, one has a meaningful inner product on the Hilbert space, which defines the notion of a vector of length one. However, as is familiar from undergraduate quantum mechanics, the phase of a normalized state can still in general be arbitrary.

In a unitary two-dimensional conformal field theory with Hilbert space \(\mathcal H\), consider a state \(\ket\phi\) with conformal weights \((h,\bar h)\). Identifying spacetime with the complex plane, the local operator \(\phi(z,\bar z)\) corresponding to \(\ket\phi\) via the operator-state correspondence can be Fourier-expanded as 
\begin{equation}
\phi(z,\bar z)=\sum_{n\in\mathbb Z+\eta}\sum_{\bar n\in\mathbb Z+\bar\eta}\phi_{-h-n,-\bar h-\bar n}z^n\bar z^{\bar n}
\end{equation}
where \(\eta,\bar\eta\in\{0,1/2\}\) according to specified boundary conditions. The modes' conformal weights take values in \((\mathbb Z+h+\eta,\mathbb Z+\bar h+\bar\eta)\). The Hermitian adjoints of these modes have conformal weights \((\mathbb Z-h-\eta,\mathbb Z-\bar h-\bar\eta)\). These two sets intersect (and coincide) precisely when \(h,\bar h\in\tfrac12\mathbb Z\). In that case, the following requirement may be nontrivially imposed:
\begin{equation}\label{conjugation}
\phi_{n,\bar n}^\dagger=\phi_{-n,-\bar n}.
\end{equation}
The set of states with (half-) integer conformal weights that satisfy \protect\cref{conjugation} form a real (but not complex) linear subspace of the entire Hilbert space. Call this subspace \(\mathcal H_0^{\mathbb R}\). Then its complexification
\[\mathcal H_0^{\mathbb R}\otimes_{\mathbb R}\mathbb C\equiv\mathcal H_0\subset\mathcal H\]
is the complex linear subspace of \(\mathcal H\) consisting of all states with (half-) integer conformal weights. By virtue of having a preferred real subspace, \(\mathcal H_0\) is equipped with a canonical \(\mathbb C\)-antilinear automorphism, which we denote\footnote{This construction is not at odds with the axiom of quantum mechanics that the (absolute) phase of a state is unobservable. The choice of operator-state map gives us the vacuum state (which maps to the identity operator), and phases of all states are measured relative to it. The phase of the vacuum (as determined by the operator-state correspondence) is arbitrary, as quantum mechanics requires.}
\[I\colon\mathcal H_0\to\mathcal H_0.\]
This map fixes the conformal weights, but otherwise all other charges are conjugated. In particular, for a theory with \(\mathcal N=(2,2)\) superconformal symmetry, \(I\) flips the various \(\operatorname U(1)\) charges:
\[I\colon(h,\bar h,q\Left,q\Right)\mapsto(h,\bar h,-q\Left,-q\Right).\]
Now, consider the parity-reversal operator \(\Omega\), which acts on the worldsheet coordinate \(z\) as
\[\Omega\colon z\mapsto\bar z.\]
This is a unitary operator that exchanges left- and right-moving symmetries. In particular,
\[\Omega\colon(h,\bar h,q\Left,q\Right)\mapsto(\bar h,h,q\Right,q\Left).\]
Composing the two, we obtain a \(\mathbb C\)-antilinear map
\[I\circ\Omega\colon(h,\bar h,q\Left,q\Right)\mapsto(\bar h,h,-q\Right,-q\Left).\]
It is easy to see that the above maps the (a,c)-ring to itself. In fact, in the A-model on a Calabi-Yau manifold, it is easy to see that the above acts on \((p,q)\)-forms as
\[(p,q)\mapsto(q,p),\]
since the degree \(p+q\) corresponds to the axial R-charge while \(p-q\) corresponds to the vector R-charge. 

Therefore, we have recovered the antilinear Hodge-symmetry map
\[\mathrm H^{p,q}\to\bar{\mathrm{H}}^{q,p}.\]

\section{Conclusion}
We wish to call the attention of physicists to the fact that many methods and results of rational homotopy theory are essentially quantum-mechanical in nature, and can be translated into physical language. Rational homotopy theory aims to study the geometry of manifolds as encoded in the de~Rham complex; as such, supersymmetric quantum mechanics as a probe of a space is its basic ingredient, and many of the main concepts that arise (Massey products, minimal models, and so on) are quite natural and pleasant to see from the perspective of physics. 

The techniques of rational homotopy theory allow one to extract the homotopy groups of a space, modulo torsion, from its de~Rham complex. We observed that identical techniques can be used to assign ``homotopy groups'' over~$\mathbb C$ to topologically twisted quantum field theories. If the theory in question satisfies an analogue of the $dd^c$-lemma, one can compute its homotopy groups just from the knowledge of the algebra of BPS states. In the simplest examples of topologically twistable sigma models, one recovers the ranks of higher homotopy groups of the target; beyond this setting, the meaning of these new invariants is unclear.

Arising out of this, one can use supersymmetric quantum mechanics as a tool to understand how close generic classes of quantum field theories are to being geometric---that is to say, describable as sigma models. 
There are certain properties that immediately mark a theory as being of non-geometric origin. For instance, the series of $\mathcal N=2$ superconformal minimal models are not geometric: their central charges, which would play the role of the dimension of the target space, are not integral. Similar integrality properties for $R$-charges (which translate in sigma models to the homological gradings) also fail. 

Work of Figueroa-O'Farrill, K\"ohl, and Spence~\cite{FOFKS} shows that these $\mathrm U(1)$ charges sometimes fit into a larger nonabelian symmetry that acts on the quantum mechanics. When this is the case, the required integrality properties follow immediately from integrality properties of weights for representations of semisimple symmetry algebras. In the context of sigma models, \cite{FOFKS} proved that these algebras (which are Lefschetz symmetries in that case) coincide with the algebras one expects to see from dimensional reduction of higher-dimensional sigma models; we pointed out that these symmetries arise in any quantum mechanics that is the dimensional reduction of a field theory, and compared them with the standard list of $R$-symmetries.

Regarding the integrality of the central charge, it is known that all representations of the two-dimensional $\mathcal N=(4,4)$ superconformal algebra have integer central charge. 
Furthermore, precisely in this case, the two $\mathrm U(1)_R$ charges fit into larger semisimple $R$-symmetries, and so also have integrality properties. This is at least circumstantial evidence that the rich structures that arise for hyper-K\"ahler sigma models are equivalent to the rich structure of $\mathcal N=(4,4)$ SCFTs, and makes it tempting to wonder whether both are equivalent to the property of coming from a six-dimensional theory. (This is obvious for $\mathcal N=(4,4)$ sigma models, which are the dimensional reductions of the six-dimensional sigma models defined on the same target space.)

For eight-dimensional hyper-K\"ahler manifolds, we observed a strange phenomenon: the Lefschetz symmetries that arise on their de~Rham complexes (which in other cases arise from dimensional reduction of sigma models) correspond to the values 10 and 26 of the worldsheet dimension. The effective target-space description of a string theory compactification on such manifolds would be two-dimensional. We have no way of understanding these numbers (and their occurrence may simply be a coincidence) but it would be interesting to search for connections between this phenomenon and the strange properties of supersymmetric quantum field theory in six dimensions. We defer this to future work.

Lastly, with an eye towards the same questions about geometric and non-geometric theories, we noted the physical relevance of results of Sullivan and others, related to the problem of reconstructing a manifold (up to diffeomorphism) from its de~Rham complex. In these results, it is important that one can begin with the de~Rham complex with rational or integer coefficients, rather than real or complex. We suggested that it may be possible to recover this data from $\mathcal N=(2,2)$ superconformal theories, subject to certain assumptions.
Understanding the appropriate analogues of theorems like Sullivan--Barge in the context of physics may lead to progress on questions such as mirror symmetry, which originate as certain ambiguities in the way that string theories encode geometric properties of the target space.
{\small
\vspace{-2ex}
\paragraph{Acknowledgments}
We thank B.~Knudsen, M.~Kolo\u glu, L.~McCloy, T.~McKinney, S.~Nawata, D.~Pei, and B.~Stoica for many fruitful conversations as these ideas were taking shape.
We are especially grateful to S.~Gukov and H.~Ooguri for comments, inspiration, and support.
Our work is funded by support from the Department of Energy (through grant {\rm{DE-SC}0011632}) and the Walter Burke Institute for Theoretical Physics at Caltech.
}
\printbibliography
%\bibliography{Massey-ascii}{}
%\bibliographystyle{aps}

\end{multicols}
\end{document}